\DeclareMathOperator{\wh}{wh}
\newtheorem{cor}{Corollary}
\newtheorem{prop}{Proposition}
\newtheorem{lem}{Lemma}
\theoremstyle{definition}
\newtheorem{defn}{Definition}
\newtheorem{cond}{Condition}
\theoremstyle{remark}
\newtheorem{rem}{Remark}
\crefname{step}{step}{steps}
\newcommand\oxfordaddress{Department of Materials,
University of Oxford,
Parks Road,
Oxford OX1 3PH,
United Kingdom}
\newcommand\qiqbaddress{Center for Quantum Information and Quantum Biology,
The University of Osaka,
1-2 Machikaneyama,
Toyonaka 560-0043,
Japan}
\begin{document}

\title{Snowflake: A Distributed Streaming Decoder}

\author{Tim Chan}
\email{timothy.chan@materials.ox.ac.uk}
\affiliation{\oxfordaddress}
\affiliation{\qiqbaddress}
\orcid{0000-0001-6187-7402}

\begin{abstract}
	We design \emph{Snowflake},
	a quantum error correction decoder
	that,
	for the surface code under circuit-level noise,
	is roughly 25\% more accurate than
	the Union--Find decoder,
	with a better mean runtime scaling:
	subquadratic as opposed to cubic in the code distance.
	Our decoder runs in a streaming fashion
	and has a distributed, local implementation.
	In designing Snowflake,
	we propose a new method for general stream decoding
	that eliminates the processing overhead due to
	window overlap in existing windowing methods.
\end{abstract}

\maketitle
\section{Introduction}
\label{sec:introduction}
Decoders are crucial to the quantum error correction
that enables fault-tolerant quantum computing.
These classical algorithms must decode
(i.e.\ infer errors from measurements and fix them)
accurately and extremely quickly to prevent the quantum computation from corrupting.
It is difficult to design a decoder sufficiently
	accurate,
	fast,
	and practical to implement in hardware \cite{Reilly2019,Das2022,Delfosse2023a}.
This last requirement has led to interest in decoders that are local
i.e.\ running on a grid of identical processors,
each communicating only with their nearest neighbours
\cite{Harrington2004,
Fowler2015,
Herold2015,
Breuckmann2017,
Herold2017,
Lang2018,
Kubica2019,
Delfosse2020a,
Smith2023,
Lake2025,
Lake2025a,
McArdle2025}.
Such decoders benefit from
	parallelism \cite[\S III.D]{Terhal2015},
	repairability \cite[\S 3]{Chan2023c},
	reduced signal losses \cite[pp~\numrange{6}{7}]{Vandersypen2017},
	and lower latency \cite[\S VI]{Reilly2019} \cite[\S 4.2]{Battistel2023}
when used with local
(requiring no long-range qubit--qubit interactions)
codes.

A relatively new candidate gaining in popularity
\cite{Li2019,
Huang2020,
Hu2020a,
Delfosse2021a,
Pattison2021,
Delfosse2022,
Das2022,
Berent2023,
Barber2025,
Griffiths2024,
Lobl2024,
Meister2024}
is the Union--Find decoder (UF)
\cite{Delfosse2020,Delfosse2021}.
Recent literature has explored UF implementations of varying levels of locality
\cite{Liyanage2024,Heer2023,Heer2023a,Chan2023c,Ziad2025,Valentino2024,Gu2025,Kishi2026},
the first being \emph{Helios} \cite{Liyanage2023a}.
The closely related \emph{Macar} \cite{Chan2023c}
exhibits very similar runtime behaviour to Helios,
and is the implementation we will repeatedly refer to for comparison.
These implementations,
save Helios,
tacked only the batch decoding problem:
a simplification of the stream decoding problem
that actually occurs in practice.
Relatively little work has gone into the latter:
since 2002,
two methods have been proposed allowing most batch decoders
to be used as stream decoders.
These are
	the commonly used \emph{forward method} \cite[\S VI.B]{Dennis2002}
	and the newer \emph{sandwich method} \cite{Skoric2023,Tan2023},
using the terminology of \cite{Tan2023}.

\begin{figure}
	\centering
	\includegraphics[width=0.48\textwidth]{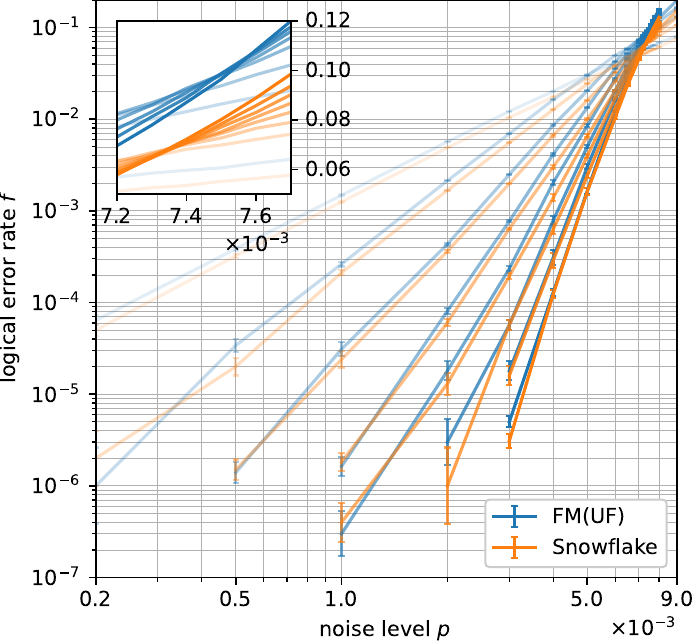}
	\caption{Threshold plots for the surface code
	under circuit-level noise.
	Each line corresponds to a fixed distance
	$d =3, 5, \dots, 17$ from shallowest to steepest.
	Each datapoint is the mean of
	\numrange{1e4}{1e7} lots of $d$ measurement rounds;
	errorbars show standard error.
	The inset zooms in on both thresholds
	whose coordinate $(p, f)_\th$ is around
	$(0.755, 10.5)\cdot\num{1e-2}$ for UF adapted with the forward method, and
	$(0.735, 6.8)\cdot\num{1e-2}$ for Snowflake.
	Here,
	the standard error is smaller than the line thickness.}
	\label{fig:FM_UF_v_Snowflake}
\end{figure}

We design Snowflake,
a generalisation of the local decoders in \cite{Chan2023c}
from the batch to the streaming case.
Snowflake comprises a simple set of local rules
with very little central management,
and works for any
error-correcting code with a decoding graph
embeddable in $\mathbb R^3$
without `long-range' edges.
Examples include the repetition code
and the surface code under circuit-level noise.
The pseudocode for Snowflake is summarised in \cref{alg:2_1_schedule}.

Snowflake relies on the new
\emph{frugal method} we develop for general stream decoding.
This is similar to the forward method,
but differs in that it discards none of the computation --
it is more frugal with resources.
This halves both
	the power draw
	and the size
of the decoder architecture.
We hope the proposal of this method
enables the development of more streaming decoders.

\Cref{fig:FM_UF_v_Snowflake} shows that,
when tested on the surface code under circuit-level noise,
Snowflake recovers \num{\approx 97}\% of the threshold of
UF adapted with the forward method
[which we abbreviate as FM(UF)],
but in absolute terms is \num{24.9(5)}\% more accurate.
Moreover,
Snowflake's mean runtime scaling with the code distance
improves upon original UF.
Snowflake has a distributed implementation
that is modular and scalable.
This is a 3D lattice of processors
though we note an alternative architecture
that `flattens' this lattice onto a 2D grid
would be even more practical given current chip design capabilities,
at a potentially marginal cost in speed.

\Cref{sec:background} covers the prerequisite theory.
We describe the frugal method in \cref{sec:the_frugal_method},
explain Snowflake in \cref{sec:snowflake},
then numerically test its accuracy and runtime in \cref{sec:numerics}.
\Cref{sec:conclusion} concludes.
In \crefrange{fig:add_layer}{fig:distributed_implementation}
and \crefrange{fig:drop_grow_merging}{fig:arbitrary_choice},
the time axis points upward.
Our emulation code is on GitHub at \cite{Chan2023a_quantum_bibstyle}.

\section{Background}
\label{sec:background}
\Cref{sec:the_detector_error_model,sec:the_decoding_graph} introduce
widely used formalisms for modelling noisy circuits.
These two subsections can be skimmed
for readers familiar with error correction.
\Cref{sec:the_batch_decoding_problem}
recaps the batch decoding problem.
In \cref{sec:the_stream_decoding_problem}
we generalise to the streaming case
and in \cref{sec:the_forward_method}
we provide one solution that already exists in literature.
Throughout the paper,
we denote the node and edge sets of any graph $\Gamma$
as $V_\Gamma$ and $E_\Gamma$, respectively.

\subsection{The Detector Error Model}
\label{sec:the_detector_error_model}

We can model noise affecting the quantum error-correction code
using a \emph{detector error model} (DEM) \cite{Higgott2025,Piveteau2024}.
Constructing a DEM starts with defining a set of \emph{detectors} and \emph{logical observables},
each of which represents a set of measurements whose parity is deterministic in the absence of noise;
e.g.\ a detector could represent two consecutive measurements of the same stabiliser
at a given point in time.
Noise is then modelled by a set of independent events called \emph{error mechanisms},
each of which flips a unique subset of detectors and logical observables.
For stabiliser measurement
under the phenomenological noise model,
	error mechanisms either a data qubit bitflip or a measurement outcome flip;
under a more realistic circuit-level noise model,
	they represent sets of error events within the physical circuit.
\Cref{sec:noise_model} describes the circuit-level noise model used in our numerics.

A DEM of $m$ detectors, $l$ logical observables, and $n$ error mechanisms
is specified by a triple $(H, L, \v p)$,
where the matrix $H \in \mathbb F_2^{m \times n}$ $(L \in \mathbb F_2^{l \times n})$
defines which error mechanism flips which detector (logical observable),
and the vector $\v p \in {]0, 1[}^n$
defines the probability each error mechanism occurs;
this is usually a function of some universal noise level $p$.

In a decoding cycle,
the error mechanisms occur with probability according to $\v p$;
the occurred set is represented by a vector $\v e \in \mathbb F_2^n$.
This flips a set of detectors and logical observables
indicated by the support of $\v s =H\v e$ and $\v \lambda =L\v e$, respectively.
The task of the decoder is to infer $\v \lambda$ given $\v s$.
More generally,
we will use the following definition.
\begin{defn}\label{defn:decoding_cycle}
A \emph{decoding cycle} is one iteration of the process
from when the decoder receives a new (or updated) $\v s$
to when it outputs a new (or updated) $\v \lambda$.
\end{defn}

In this paper
we focus on the $l =1$ case;
specifically,
the Z memory experiment
i.e.\ inferring whether the $\hat \Z$ observable of
one logical qubit has flipped
after some number of stabiliser measurement rounds.
We consider decoders that output a specific correction vector
$\v c \in \mathbb F_2^n$
satisfying
\begin{equation}\label{eq:correction_vector_requirement}
H\v c =\v s;
\end{equation}
the $\v \lambda$ inferred is then $\widehat{\v \lambda} =L\v c \in \mathbb F_2^1$.

\subsection{The Decoding Graph}
\label{sec:the_decoding_graph}
Additionally,
we focus on DEMs in which each error mechanism flips at most two detectors
(or DEMs that can be approximated as such).
In this special case,
we can construct a \emph{decoding graph} $G$ from the DEM.
We start by representing each detector as a node.
Then for each error mechanism:
if it flips exactly two detectors $`{u, v}$,
we add an edge $uv$;
if it flips exactly one detector $u$,
we add an edge between $u$ and a dedicated \emph{boundary node}
which represents nothing physical,
but exists solely to allow the error mechanism to be represented by an edge.

Thus,
$V_G$ is partitioned into two sets:
\emph{detectors} $V_\d$ and \emph{boundary nodes} $V_\b$.
Each node is assigned a unique spacetime coordinate
such that for codes with local stabilisers (e.g.\ the surface code),
there are no `long-range' edges.
This locality is useful for
local decoders that process information between nearby detectors,
like Snowflake.
$V_\b$ is further partitioned
into the \emph{west boundary} and the \emph{east boundary} sets,
chosen such that an edge flips the logical observable
iff it is incident to the west boundary.

Since $e \in E_G$
represents an error mechanism which can occur or not,
we assign $e$ a bit value $x \in `{0, 1}$
and may refer to it as a \emph{bit value-$x$ edge}.
The bit value $x$ is not known to the decoder,
but it \emph{can} flip $x$:
we call this \emph{flipping the edge}.
Of course,
the decoder can
keep track of which edges it has flipped.
A \emph{defect} is a detector incident to
an odd number of bit value-1 edges.
The \emph{syndrome} $\mathbb S \subseteq V_\d$ is the set of defects.
The decoder can modify $\mathbb S$:
flipping an edge $uv$ updates
$\mathbb S \gets \mathbb S \sd `{u, v}$
where $\sd$ denotes symmetric difference.
\begin{defn}\label{defn:push_and_annihilate}
In the case $u \in \mathbb S$ and $v \notin \mathbb S$,
we say flipping $uv$
will \emph{push} the defect at $u$ along $uv$ to $v$.
In the case $u, v \in \mathbb S$,
we say flipping $uv$
will \emph{annihilate} the defect at $u$ with the defect at $v$.
We extend this to the case of a path $P$ of edges
connecting a defect at $s$ to another defect at $t$ (or to a boundary node):
flipping the edges in $P$ will push the defect at $s$ along $P$
to annihilate with the defect at $t$ (or with the boundary node).
\end{defn}

\begin{defn}\label{defn:annihilate_all_defects}
A set $F \subseteq E_G$ \emph{annihilates} syndrome $\mathbb S$
if $F$ constitutes paths that pair each defect with either
another defect or a boundary node.
\end{defn}

\subsection{The Batch Decoding Problem}
\label{sec:the_batch_decoding_problem}
In terms of the decoding graph,
the decoding cycle is as follows.
Each edge is assigned bit value 1 with probability according to $\v p$.
Knowing only $\mathbb S$,
the decoder must flip edges
until $\mathbb S =\varnothing$.
\begin{defn}\label{defn:correction}
The \emph{correction} $\mathbb C \subseteq E_G$
is the final set of edges the decoder chooses to flip.
\end{defn}
\noindent
Flipping edges until $\mathbb S =\varnothing$ means
$\mathbb C$ must annihilate $\mathbb S$;
this is equivalent to \cref{eq:correction_vector_requirement},
where $\mathbb S$ and $\mathbb C$
are analogous to $\v s$ and $\v c$.
After applying $\mathbb C$,
the remaining bit value-1 edges can only constitute either cycles,
or paths that end at boundaries.
Such a path will flip the logical observable
if it ends at the west boundary exactly once;
this motivates the following.

\begin{defn}\label{defn:logical_error_rate}
A \emph{logical flip} is path of bit value-1 edges
from the west to the east boundary.
The \emph{logical error rate}
is the logical flip count
per $d$ measurement rounds~\cite[\S VI]{Stephens2014},
where $d$ is the code distance.
\end{defn}
\noindent
The shortest logical flip has $d$ edges.
We evaluate decoder accuracy
by calculating the logical error rate.

\subsection{The Stream Decoding Problem}
\label{sec:the_stream_decoding_problem}

\begin{figure}
	\centering
	\includegraphics[width=0.48\textwidth]{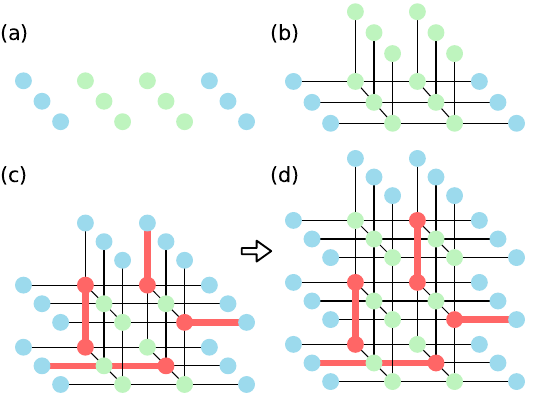}
	\caption{Defects are red nodes;
	other detectors, green;
	boundary nodes, blue.
	Bit value-1 edges are thick and red;
	bit value-0 edges, thin and black.
	(a) A sheet and (b) a layer
	of (c) the decoding graph for the surface code
	under phenomenological noise.
	The west, east, and future boundaries
	are the set of blue nodes on the
	left, right, and upper faces of the decoding graph,
	respectively.
	(d) How said graph
	changes after one measurement round:
	one layer is added on top.
	This is the streaming generalisation of \cite[Figure~3b]{Chan2023c}.
	}
	\label{fig:add_layer}
\end{figure}

We now generalise the batch decoding problem
to the streaming case,
relevant for very long memory experiments.
\Cref{fig:add_layer}a,b illustrates the following definitions:
	a \emph{sheet} is a subset of $V_G$ all with the same $t$ coordinate,
	and a \emph{layer} is the periodic unit subgraph of $G$.
For the stream decoding problem,
there are an \emph{indefinite} number of measurement rounds;
this means new layers are constantly being added
to the top of $G$ as in \cref{fig:add_layer}c,d.
Note the addition of a \emph{future boundary}:
a third set of boundary nodes representing detectors
whose parities are yet to be determined.
When a new layer is added,
its detectors replace the future boundary of the previous layer.
In general,
some edges in the new layer are assigned bit value 1
(specifically,
the probability that $x =1$ for each edge
equals that of the analogous edge in the layer below),
creating defects in the new layer.
The decoder must constantly add to $\mathbb C$
to annihilate these new defects.

At the end of the Z memory experiment,
all data qubits are destructively measured in the Z basis.
This corresponds the final layer that is added onto $G$,
and it has no future boundary.
Since the final form of $G$ has only a west and east boundary,
we can once again
evaluate decoder accuracy
by \cref{defn:logical_error_rate}.

For clarity,
all figures in the rest of this paper,
save \crefrange{fig:distributed_implementation}{fig:stage_flowchart},
use a planar $G$ corresponding to the repetition code
under phenomenological noise.
However,
the same concepts apply to other graphs
such as for the surface code under circuit-level noise.
In the next subsection
we discuss an existing approach to solve the stream decoding problem.

\subsection{The Forward Method}
\label{sec:the_forward_method}
This method was first proposed in
	\cite[\S VI.B]{Dennis2002} as the `overlapping recovery method'
but a more detailed explanation can be found in
	\cite[pp~\numrange{2}{3}]{Skoric2023} where it is called `sliding window decoding'.
It has been applied to the general class of \emph{quantum low-density parity-check codes}
(LDPC, which includes the repetition and surface codes)
\cite{Berent2024,Huang2024,Gong2024,Kuo2024}.
It reduces the stream decoding problem to batch decoding
by chunking the syndrome,
decoding each chunk,
then stitching together all the corrections.
The chunks must overlap,
else a small defect pair could straddle two consecutive chunks
hence be decoded inaccurately.
\begin{figure}
	\centering
	\includegraphics[width=0.48\textwidth]{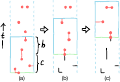}
	\caption{Two decoding cycles of the forward method.
	The \emph{decoding window} (\cref{defn:decoding_window})
	is the solidly outlined rectangle
	with boundaries in blue.
	Defects are red dots.
	The \emph{tentative correction} (\cref{defn:tentative_correction}) comprises the red lines;
	edges flipped by the decoder, black lines.
	The commit and buffer region heights
	are $c$ and $b$, respectively.
	(a) The tentative correction contains a path that spans both regions.
	(b) This path is partially committed
	which makes an `artificial defect' at the bottom of the new window
	that must also be annihilated.
	(c) The rest of the path is committed.}
	\label{fig:forward_method}
\end{figure}

\begin{figure*}
	\centering
	\includegraphics[width=\textwidth]{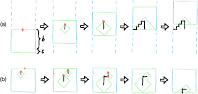}
	\caption{Two examples of how Snowflake finds its tentative correction.
	Features are shown as in \cref{fig:forward_method} and additionally,
	\emph{clusters} (\cref{defn:cluster}) are outlined in green.
	(a) A cluster grows around one defect as the decoding window is raised,
	while the defect is always \emph{pushed}
	(\cref{defn:push_and_annihilate})
	to the highest point.
	As soon as the cluster touches a boundary,
	the defect is instead pushed to that boundary
	(in this case,
	the cluster touches the west and east boundaries at the same time
	and arbitrarily chooses to push westward).
	This \emph{inactivates} (\cref{defn:inactive_cluster}) the cluster so it stops growing.
	It gradually disappears from view.
	The defect's path is staircase-like
	as it comprises edges from the decoding graph.
	(b) Two defects separated in time and space
	(this time viewed from the perspective of the window
	so clusters appear to fall).
	The lower cluster is older,
	hence larger,
	when both clusters merge.
	Frame 3 shows the lower defect on its way to the highest point
	to annihilate with the other defect,
	inactivating the resultant cluster.}
	\label{fig:snowflake}
\end{figure*}

More concretely,
assume $G$ already comprises many layers,
and define the following.
\begin{defn}\label{defn:decoding_window}
The \emph{decoding window} is at any given time
a graph $W =(C \cup B) \subseteq G$
comprising $c +b$ consecutive layers of $G$,
wherein the highest sheet of detectors is replaced by a future boundary.
The lowest $c$ layers of $W$ is the
\emph{commit region} $C$;
the highest $b$ layers of $W$ is the
\emph{buffer region} $B$.
\end{defn}
\begin{defn}\label{defn:tentative_correction}
The \emph{tentative correction} $\mathbb T \subseteq E_W$
is the decoder's output for a given decoding window $W$.
This differs from \cref{defn:correction} in that $\mathbb T$ may not be final.
We say we \emph{commit} edges from $\mathbb T$
when we transfer them to $\mathbb C$.
\end{defn}
\noindent
The forward method is
illustrated in \cref{fig:forward_method}.
$W$ starts at the bottom,
i.e.\ as the lowest $c +b$ layers,
of $G$,
and $\mathbb C =\varnothing$ initially.
The method then indefinitely repeats the
following (skipping \cref{step:raise} in the first iteration).
\begin{algorithm}[H]
\caption{The forward method decoding cycle.}
\label{alg:forward_method_decoding_cycle}
\begin{algorithmic}[1]
	\State Raise $W$ by $c$ layers
	i.e.\ redefine it as the subgraph of $G$
	that is $c$ layers higher.
	\label[step]{step:raise}
	\State Use any batch decoder to find a tentative correction $\mathbb T$
	that annihilates $\mathbb S \cap V_W$.
	\State Update
	$\mathbb C \gets \mathbb C \cup (\mathbb T \cap E_C)$
	and forget $\mathbb T \cap E_B$.
\end{algorithmic}
\end{algorithm}
\begin{rem}\label{rem:forward_method_discards}
Since $b$ `layers of computation' are forgotten per decoding cycle,
decoding $\tau$ sheets of syndrome data discards
a total of $\mathord{\approx} b \tau/c$ layers of computation.
\end{rem}
\begin{rem}\label{rem:small_buffer_or_commit_height}
Assuming all probabilities in $\v p$ are of the same order,
we need the window overlap $b \gtrsim d$
to ensure no loss in accuracy \cite[\S IV.D]{Bombin2023}.
For $c$ on the other hand,
there is no such lower limit imposed
i.e.\ we could set $c =1$ with no loss in accuracy
(so long as $b \gtrsim d$).
However,
there are other reasons why $c$ cannot be too small or large:
if too small,
then many layers of computation are discarded;
if too large,
then $W$ becomes too tall
leading to slow decoding cycles
or in the case of a lookup-table decoder,
high memory requirements.
\end{rem}
\noindent
The most common choice thus is $c =b =d$
though others have been explored
e.g.\ $0 \le b <d$ \cite[\S 3]{Delfosse2023a} \cite[\S B.2]{Tan2023}
\cite[\S VII.C]{Bombin2023} \cite[\S 5.2]{Lin2025a}
or $c =1$ \cite[\S 5.1]{Das2022a_quantum_bibstyle} \cite{Huang2024,Gong2024}.
In almost all cases,
a considerable amount of computation is discarded.
This is true for the sandwich method too
as it also requires overlapping windows.
\begin{rem}\label{rem:future_boundary}
The future boundary in $W$
improves decoding accuracy \cite[\S B.3]{Tan2023}
compared to if it were absent,
as it allows the decoder to account for
small defect pairs that straddle this boundary
like the one in \cref{fig:forward_method}a.
By tentatively pairing the known defect to the future boundary,
the decoder delays the decision on how to properly annihilate it
until it discovers the other defect just above
in \cref{fig:forward_method}b.
The same occurs with another defect near the top of this window,
but this time it turns out there are no nearby defects above,
so it is instead paired to the west boundary
in \cref{fig:forward_method}c.
\end{rem}

\begin{rem}
In \emph{hard} real-time decoding,
the decoder is given a hard time limit for each decoding cycle,
which should be set to no longer than
$c$ ancilla qubit measurement intervals.
If it exceeds this limit,
it is interrupted and the logical qubit is assumed to fail.
On the other hand,
\emph{soft} real-time decoding demands less
by introducing a buffer (in the conventional sense) between
	the newest measurement data
	and the data being decoded,
in which case only \emph{an average} decoding cycle
must complete faster than $c$ intervals;
any slower then the backlog of syndrome data
grows indefinitely \cite[p~326]{Terhal2015}.
In practice,
decoding will likely be a hard real-time system
\cite[\S 2.1]{Battistel2023}.
\end{rem}
\noindent
This remark applies also to the frugal method
discussed in the next section.
We analyse both individual and mean runtime
for Snowflake in \cref{sec:throughput}.

\section{The Frugal Method}
\label{sec:the_frugal_method}
In this section we describe our
alternative to the forward method.
The frugal method can be thought of as
eagerly raising the decoding window
\emph{during} the computation of the correction,
meaning it need only be complete within the commit region.
This idea is made more precise below.
\begin{algorithm}[H]
\caption{The frugal method decoding cycle.}
\label{alg:frugal_method_decoding_cycle}
\begin{algorithmic}[1]
	\State Raise $W$ by $c$ layers,
	same as in \cref{alg:forward_method_decoding_cycle}.
	\label[step]{step:frugal_raise}
	\State Find a tentative correction $\mathbb T$ that satisfies
	\cref{cond:annihilate_all_defects_in_commit_region}.
	This condition is weaker than that in
	\cref{alg:forward_method_decoding_cycle}.
	\label[step]{step:find_tentative}
	\State Update
	$\mathbb C \gets \mathbb C \cup (\mathbb T \cap E_C)$,
	same as in \cref{alg:forward_method_decoding_cycle}.
	The difference is that $\mathbb T \cap E_B$ and
	the computation (i.e.\ the internal state of the decoder)
	associated with $B$
	is not forgotten
	but used as a starting point for,
	and further developed in,
	the next decoding cycle.
	\label[step]{step:remember_buffer_computation}
\end{algorithmic}
\end{algorithm}
\begin{cond}\label{cond:annihilate_all_defects_in_commit_region}
$\mathbb T$ annihilates $\mathbb S \cap V_C$.
\end{cond}
\begin{rem}\label{rem:frugal_method_discards}
The reuse of computation in \cref{step:remember_buffer_computation} means
zero layers of decoder computation are discarded,
regardless of $b$ or $c$ (c.f.\ \cref{rem:forward_method_discards}).
\end{rem}
\noindent
As computation is never discarded,
the power draw and thus heat generated by the device
is minimised.
This is vital for an on-chip, local decoder
integrated with the qubits,
as most qubit platforms demand low temperatures.
Compared to the forward method for $c =b$,
where 1 layer of computation is discarded per sheet of syndrome data,
the power draw is halved.

Given \cref{rem:frugal_method_discards}, for
the rest of the paper we set $c =1$ in the frugal method.
This has no effect on the decoding accuracy,
but it means the method becomes local
(provided the decoder itself is local) as,
from the perspective of the decoding window,
data shifts down by \emph{one layer}
when $W$ is raised.
The commit region $C$ is also one layer thick,
so $\mathbb C \cap E_C$
can be offloaded from the bottom of the decoding window locally.
Conceptually,
this makes the method much more granular in nature.

We note the frugal method is similar to
another recently proposed stream decoding method called \emph{fusion}~\cite[\S IV]{Wu2023a}.
Further,
the special case of the frugal method with $c =1$ is very similar to
a special case of fusion called \emph{round-wise fusion}~\cite[\S 6]{Wu2025_quantum_bibstyle},
in which the decoder updates its solution every time a new sheet of syndrome data arrives.
Like the frugal method,
fusion also avoids redundant computation,
but does so by dividing the decoding graph into windows that do not overlap,
then relying on the properties of MWPM to guarantee no loss in accuracy
when combining solutions from neighbouring windows.
Since fusion is designed specifically for use with MWPM,
it is unclear if and how much accuracy is lost when used with UF.

\section{Snowflake}\label{sec:snowflake}
In this section, we present our decoder.
\Cref{sec:snowflake_main_idea} describes Snowflake's main idea of
modifying UF for the gradual window-raising of the frugal method.
\Cref{sec:cluster_growth_schedules} discusses choices in
cluster growth that drastically affect decoding accuracy.
Snowflake uses a modified decoding window
\emph{without} a future boundary
(so in \cref{defn:decoding_window},
the highest sheet of detectors
and its incident edges is simply absent);
this will be explained in \cref{sec:constructing_the_decoding_window}.
We conclude in \cref{sec:distributed_implementation}
by describing a distributed implementation of our decoder.

In Snowflake,
we imagine flipping edges
(thus altering $\mathbb S$)
\emph{during} the computation of $\mathbb T$
in \cref{alg:frugal_method_decoding_cycle} \cref{step:find_tentative}.
Snowflake then performs \cref{step:remember_buffer_computation,step:frugal_raise}
simultaneously,
committing said flips in $E_C$ as $W$ is raised.
This has no physical consequence on the qubits,
as $\mathbb C$ is ultimately only tracked by classical electronics
in the so-called \emph{Pauli frame}
\cite{Knill2005,Riesebos2017_quantum_bibstyle}.

\subsection{Main Idea}
\label{sec:snowflake_main_idea}
The algorithm is inspired by UF
and explained by analogy to snowfall.
As snowflakes fall they gradually grow.
If two snowflakes touch they merge i.e.\ become one.
In the same way,
an active cluster nucleates at each new defect at the top of
the decoding window
and grows as it falls.
If two clusters touch they merge and become one.
\Cref{fig:snowflake} illustrates these ideas,
and below we make them more precise.

\begin{defn}\label{defn:cluster}
A \emph{cluster} $K$ is a connected subgraph of $W$
with the additional property that edges can be half-included in it.
Connectivity is defined by the fully included edges.
It \emph{touches} another cluster $L$ (or a boundary $V_\b$)
if $V_K \cap V_L \ne \varnothing$ (or $V_K \cap V_\b \ne \varnothing$).
\end{defn}
\noindent
Each edge $e \in E_W$ has a \emph{growth value},
$e.\texttt{growth} \in `{0, \frac12, 1}$
indicating how much of it is included in \emph{any} cluster.
When a cluster grows,
it does so in all possible directions
(defined by the edges of $G$)
by half an edge;
this is concretely achieved through \cref{alg:cluster_procedures} \textsc{Grow}.
\begin{algorithm}[H]
\caption{Cluster procedures in Snowflake.}
\label{alg:cluster_procedures}
\begin{algorithmic}[2]
	\Procedure{Grow}{$K$}
		\ForAll{$e \in E_W: e$ is incident to $V_K$}
			\State $e.\texttt{growth} \gets \min`{1, e.\texttt{growth} +\frac12}$
		\EndFor
	\EndProcedure
	\Statex
	\Procedure{Merge}{$K$}
		\If{$V_K$ has a boundary node $v$}
			\State $\texttt{root} :=v$
		\Else
			\State $\texttt{root} := v \in V_K$
			with highest $t$ coordinate
		\EndIf
		\For{each defect $v$ in $K$}
			\State $P :=$ a path in $K$ from $v$ to \texttt{root}
			\State flip the edges in $P$
			\label{line:push_defect_to_root}
		\EndFor
	\EndProcedure
\end{algorithmic}
\end{algorithm}
\noindent
When two clusters become connected by a fully grown edge,
they become a single cluster by \cref{defn:cluster}.
\begin{defn}\label{defn:inactive_cluster}
A cluster is \emph{inactive}
iff all its defects can be annihilated
by flipping some subset of its fully included edges;
else, it is \emph{active}.
\end{defn}
\noindent
Active clusters will eventually inactivate if they grow large enough;
this is the general idea of UF:
to minimally grow each active cluster until it inactivates.
Note the following lemma from \cite[\S A.2]{Chan2023c}.
\begin{lem}\label{lem:cluster_inactivity}
A cluster is inactive iff it has an even defect count
or touches a boundary.
\end{lem}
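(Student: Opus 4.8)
The plan is to first reduce the lemma to a static combinatorial statement. By definition a cluster stops growing precisely when it already supports a correction confined to itself, so, writing $K$ for the cluster regarded as a subgraph of $G$, the claim is equivalent to: there exists an edge set $\mathbb C_K \subseteq K$ annihilating every defect of $K$ if and only if $K$ has an even number of defects or contains a boundary node. Spelled out through \cref{defn:defect}, ``$\mathbb C_K$ annihilates every defect of $K$'' says exactly that each detector of $K$ is incident to an odd number of edges of $\mathbb C_K$ iff it is a defect, with boundary nodes imposing no constraint. This equivalence is the correctness criterion behind the peeling decoder and is what is proved in \cite[\S A.2]{Chan2023c}; the short self-contained argument I would give runs as follows.

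For the ``if'' direction, $K$ is connected, so pick a spanning tree $T$ and root it at a boundary node of $K$ when one exists, otherwise at an arbitrary detector $r$. Peeling leaves of $T$ from the bottom up forces a unique $\mathbb C_K \subseteq T$ realising any prescribed incidence parity at every non-root node: a pendant leaf $v$ has a single tree edge $e$, so whether $e \in \mathbb C_K$ is fixed by $v$'s target parity; deleting $v$ toggles the target at its parent, and the construction recurses. Take the targets to be the defect indicators. Summing the resulting parity identities over all of $V(T)$ counts each edge of $\mathbb C_K$ twice, so the incidence parity that $\mathbb C_K$ induces at $r$ equals the total defect count of $K$ plus the indicator that $r$ is itself a defect, modulo $2$. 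Hence if the defect count is even this already matches $r$'s own defect indicator, and if $r$ is a boundary node there is no constraint at $r$ to meet; either way $\mathbb C_K$ annihilates every defect of $K$, so the cluster is inactive.

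For the converse I would argue by contraposition: if $K$ is inactive then some $\mathbb C_K \subseteq K$ annihilates all its defects, and if moreover $K$ contains no boundary node then every edge of $\mathbb C_K$ has both endpoints in $V(K)$, so summing incidence parities over $V(K)$ yields an even number, which is the total defect count of $K$ modulo $2$. The computations here are trivial and I expect no genuine obstacle; the care needed is only in the setup — fixing ``inactive'' to mean ``supports a correction confined to the cluster'' (the reading inherited from \cite{Chan2023c}), noting the peeled edges all lie in $K$ since they are spanning-tree edges, and getting the parity bookkeeping at the root right. Failing all that, one may simply cite \cite[\S A.2]{Chan2023c}.
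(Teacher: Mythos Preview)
Your argument is correct, but note that the paper itself supplies no proof at all: the lemma is simply imported from \cite[\S A.2]{Chan2023c} and stated without justification. Your proposal therefore already exceeds what the paper does, since you both give the citation \emph{and} spell out the underlying spanning-tree peeling argument that the reference presumably contains. The reduction you make---that ``inactive'' means ``supports a correction confined to the cluster''---is exactly the Union--Find convention inherited from \cite{Chan2023c}, and the parity bookkeeping in both directions is sound; the one cosmetic point is that writing the root parity as ``total defect count plus the indicator that $r$ is a defect'' is a slightly roundabout way of saying $\sum_{v\neq r} t_v$, but it is of course equivalent modulo $2$.
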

\noindent
Snowflake determines the first condition
by pushing all defects to a unique point in the cluster called the \emph{root},
determined in \cref{alg:cluster_procedures} \textsc{Merge}.
When two defects meet they annihilate,
so eventually zero or one defect will remain at the root,
indicating the cluster's defect count parity.
In \cref{line:push_defect_to_root},
defects are pushed to the root by flipping edges,
as per \cref{defn:push_and_annihilate}.
In this way, $\mathbb T$
	is the symmetric difference of all paths traced by defects
	and moreover is guaranteed to annihilate the defects in
	any inactive cluster.
The idea then is to inactivate all clusters
once they have fallen far enough.
Note it is not necessary for the path a defect takes to reach the root
to be of minimum length.

\subsection{Cluster Growth Schedules}
\label{sec:cluster_growth_schedules}
\begin{figure}
	\centering
	\includegraphics[width=0.48\textwidth]{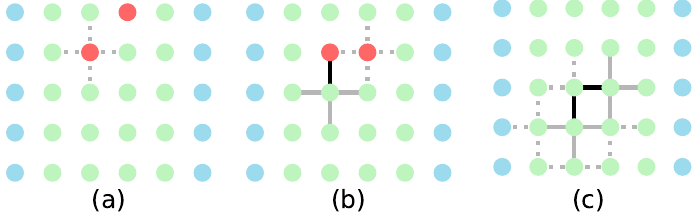}
	\caption{A concrete depiction of \cref{fig:snowflake}b
	for a distance-5 code
	to show how the 1:1 cluster growth schedule violates \cref{lem:error_covered_by_clusters}.
	Ungrown edges are invisible;
	half-grown, dotted;
	fully grown, solid.
	Edges flipped by the decoder are in black.}
	\label{fig:issue}
\end{figure}
Here we consider two different \emph{cluster growth schedules}
i.e.\ rules on how active clusters grow.
\Cref{sec:naive_growth} introduces the simplest possible one,
\emph{1:1},
and identifies an issue that hinders its accuracy.
We fix this in \cref{sec:improved_growth} with an improved schedule,
\emph{2:1}.

\subsubsection{1:1 Schedule}\label{sec:naive_growth}
In the most straightforward schedule,
Snowflake performs one growth round per decoding cycle,
as the example in \cref{fig:issue} shows.
A \emph{growth round} here comprises
	growing each active cluster once,
	redefining touching clusters,
	and pushing defects to roots.
\Cref{alg:1_1_schedule} summarises this.
\begin{algorithm}[H]
\caption{A decoding cycle of Snowflake with the 1:1 schedule,
using procedures from \cref{alg:cluster_procedures}.}
\label{alg:1_1_schedule}
\begin{algorithmic}
	\State raise $W$ by 1 layer
	\For{each active cluster $K$}
		\State \Call{Grow}{$K$}
	\EndFor
	\For{each cluster $K$}
		\State \Call{Merge}{$K$}
	\EndFor
\end{algorithmic}
\end{algorithm}
\noindent
To see the issue with this schedule,
first note the following lemma for \cite[Theorem~1]{Delfosse2021}
that is satisfied by original UF.
\begin{lem}\label{lem:error_covered_by_clusters}
Whenever a cluster grows,
the number of initial bit value-1 edges
(i.e.\ whose bit values are initially 1)
covered by the union of all clusters
increases by $\mathbin{\ge}1/2$.
\end{lem}
\noindent
This relies on the impossibility for
two active clusters to be only half an edge apart.
The theorem leads to the following for original UF.
\begin{cor}\label{cor:uf_can_correct}
Any combination of $s <d/2$ bit value-1 edges can be corrected.
\end{cor}
\begin{proof}
\Cref{lem:error_covered_by_clusters}
guarantees that after $2s$ growth rounds,
all initial bit value-1 edges are covered by the union of all clusters,
and no more growth occurs.
The largest possible cluster diameter after this many growth rounds is $2s <d$
so no logical flip is possible.
\end{proof}

For Snowflake with the 1:1 schedule,
it is possible for
two active clusters to be only half an edge apart;
see e.g.\ \cref{fig:issue}b.
Then,
during growth,
they will overgrow `into each other'
rather than stop growing as soon they touch.
This violates \cref{lem:error_covered_by_clusters}
and leads to unnecessarily large clusters.
E.g.\ assume the defect pair in \cref{fig:issue}a resulted from $s =2$ initial bit value-1 edges;
in c,
two clusters grow but only 1/2 more of the initial bit value-1 edges are covered.
The resultant cluster has diameter $9/2 >s$.

\subsubsection{2:1 Schedule}\label{sec:improved_growth}
To fix this,
we can grow clusters more cautiously.
Note in \cref{fig:issue}b
it is enough for just one of the clusters to grow
for the two to merge,
and the reason they are half an edge apart
is that one has grown an even,
and the other an odd,
number of times.
So,
define a \emph{whole (half) cluster} as one that
has grown an even (odd) number of times,
and a \emph{mixed cluster} as one that has resulted from
merging a whole with a half cluster.
Visually,
a whole (half) cluster is,
when isolated (always),
surrounded by ungrown (half-grown) edges.
A mixed cluster,
e.g.\ in \cref{fig:issue}c,
is always surrounded by both types of edge.

In this schedule,
Snowflake performs two different growth rounds per decoding cycle:
first growing the whole clusters,
then the half clusters,
as \cref{fig:fix_issue} shows.
After each round,
all clusters reevaluate their activity.
Mixed clusters need not be considered
due to the following.
\begin{prop}\label{prop:no_mixed_clusters}
Mixed clusters never occur with the 2:1 schedule.
\end{prop}
\noindent
We prove this in \cref{sec:proof_of_prop:no_mixed_clusters}.
Lastly,
for balance,
whole clusters that grow in the first round
(thus becoming half clusters)
do not grow again in the same decoding cycle.
\Cref{alg:2_1_schedule} summarises this.
\begin{algorithm}[H]
\caption{A decoding cycle of Snowflake with the 2:1 schedule,
using procedures from \cref{alg:cluster_procedures}.}
\label{alg:2_1_schedule}
\begin{algorithmic}[2]
	\State raise $W$ by 1 layer
	\For{each active whole cluster $K$}
		\State \Call{Grow}{$K$}
		\label{line:grow_all_whole_clusters}
	\EndFor
	\For{each cluster $K$}
		\State \Call{Merge}{$K$}
	\EndFor
	\For{each active half cluster $K$}
		\If{no part of $K$ grew in \cref{line:grow_all_whole_clusters}}
			\State \Call{Grow}{$K$}
		\EndIf
	\EndFor
	\For{each cluster $K$}
		\State \Call{Merge}{$K$}
	\EndFor
\end{algorithmic}
\end{algorithm}
\begin{figure}
	\centering
	\includegraphics[width=0.48\textwidth]{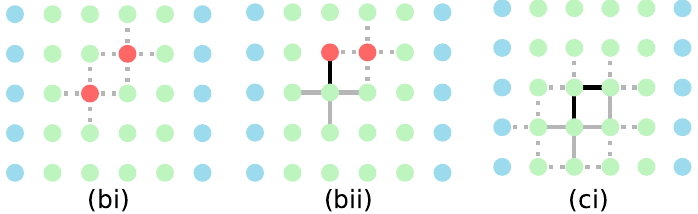}
	\caption{Continuing from \cref{fig:issue}a
	but with the 2:1 cluster growth schedule,
	which shows \cref{lem:error_covered_by_clusters} is no longer violated.
	`i' labels the first growth round of the decoding cycle;
	`ii', the second.}
	\label{fig:fix_issue}
\end{figure}

The 2:1 schedule satisfies \cref{lem:error_covered_by_clusters}
as it prevents the overgrowth issue:
if two active clusters are half an edge apart,
only one of them will grow.
E.g.\ in \cref{fig:fix_issue}ci,
only the whole cluster (lower west) grows
and the resultant half cluster has diameter $4 =2s$.
Our numerics in \Cref{sec:schedule_accuracy_comparison}
show the 2:1 schedule almost doubles the threshold of 1:1,
so for the rest of this paper,
we consider Snowflake with the 2:1 schedule.

\subsection{Constructing the Decoding Window}
\label{sec:constructing_the_decoding_window}
We now discuss the boundaries
and required height of Snowflake's decoding window.
Despite \cref{rem:future_boundary},
Snowflake uses a modified decoding window without a future boundary
because in either schedule,
clusters fall twice as fast as they grow
so this boundary would never be touched.

The height of the buffer region is lower-bounded by the following.
\begin{prop}\label{prop:min_buffer_height}
The minimum layer count of the buffer region
that guarantees \cref{cond:annihilate_all_defects_in_commit_region},
is $b_{\min} :=2\lfloor d/2 \rfloor$.
\end{prop}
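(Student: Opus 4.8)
The plan is to translate \cref{cond:complete_in_commit_region} into a statement about clusters and then control, as a function of $b$, how long a cluster can stay active. Since $c=1$, the commit region is the single lowest layer of $W$. Inside an inactive cluster $\mathbb T$ is already complete, and the lone unpaired defect of an active (odd) cluster sits precisely at its root, so $\mathbb T$ is complete in the commit region exactly when no active cluster has its root in that lowest layer. It therefore suffices to show that $2\lfloor d/2\rfloor$ is the least buffer height making this hold at every cycle.

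Two ingredients drive the bound. Geometrically, every detector lies within graph-distance $\lfloor d/2\rfloor$ of some boundary node --- for the codes in scope this follows from the code distance being $d$, since a minimum-weight logical error passes through the detector and splits there into two boundary-bound paths of total length $d$, so one of them has length at most $\lfloor d/2\rfloor$. As an active cluster gains half an edge of radius per growth round, it must reach a boundary --- hence, by \cref{lem:cluster_inactivity}, inactivate --- within $2\lfloor d/2\rfloor$ growth rounds. Temporally, clusters grow downward and sideways in time but never upward (there is no layer above the top of $W$, and any later-created defect can only raise a cluster's highest point), so the root of an active cluster sits in the layer where its most recently created defect appeared, and since $W$ rises one layer per cycle that layer enters the commit region exactly $b$ cycles after that defect's creation.

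For $b=2\lfloor d/2\rfloor$ I would prove, by induction over decoding cycles, the invariant that every cluster $K$ active at cycle $t$ satisfies $\bigl(t-\lambda(K)\bigr)+\rho(K)\le 2\lfloor d/2\rfloor$, where $\lambda(K)$ is the layer of the root of $K$ and $\rho(K)$ the number of further growth rounds in which $K$'s growth frontier is guaranteed to reach a boundary node. It holds on nucleation ($t-\lambda(K)=0$, $\rho(K)\le 2\lfloor d/2\rfloor$ by the geometric ingredient); a growth round that leaves $K$ active raises $t-\lambda(K)$ by one and lowers $\rho(K)$ by one; and for a merge producing an active cluster $K$ from sub-clusters $K_j$, one has $\lambda(K)=\max_j\lambda(K_j)$ and $\rho(K)\le\min_j\rho(K_j)$. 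Since an active merged cluster always contains at least one active sub-cluster, picking the $K_j$ attaining $\lambda(K)$ if it is active, or else any active $K_a$ (which then has $\lambda(K_a)\le\lambda(K)$), the invariant for that sub-cluster passes to $K$. The invariant forces $t-\lambda(K)\le 2\lfloor d/2\rfloor$, with equality only when $\rho(K)=0$, i.e.\ only when $K$ has already reached a boundary and is inactive; so an active cluster's root never reaches the commit-region layer $t-2\lfloor d/2\rfloor$, and \cref{cond:complete_in_commit_region} holds. For the matching lower bound I would take a single defect at a detector exactly $\lfloor d/2\rfloor$ edges from the nearest boundary: its cluster never merges and first touches a boundary on growth round $2\lfloor d/2\rfloor$, so at the end of round $2\lfloor d/2\rfloor-1$ it is still active while its root, pinned to the nucleation layer, has descended $2\lfloor d/2\rfloor-1$ layers; with $b=2\lfloor d/2\rfloor-1$ that root is then in the commit region, violating \cref{cond:complete_in_commit_region}. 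Hence $b_{\min}=2\lfloor d/2\rfloor$.

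The parts I expect to need the most care are the geometric ingredient --- stating ``every detector is within $\lfloor d/2\rfloor$ of a boundary'' precisely for the relevant family of decoding graphs --- and the merge step of the induction: tracking the half-edge frontier so that $\rho$ genuinely drops by exactly one per growth round, handling multi-way merges, and checking that momentarily even (hence frozen, non-growing) clusters re-entering an active cluster never break the invariant, which is where the sub-cluster case analysis above has to be made airtight.
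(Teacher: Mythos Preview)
Your proposal is correct and shares the paper's core idea, but the paper's own proof is far terser: it simply asserts that the worst case is a lone defect created $\lfloor d/2\rfloor$ edges from any boundary, observes that its cluster needs $2\lfloor d/2\rfloor$ half-edge growth rounds to reach that boundary, and concludes that $\mathbb T$ is complete in layer $b_{\min}+1$ from the top and below but not above. Your inductive invariant $(t-\lambda(K))+\rho(K)\le 2\lfloor d/2\rfloor$ is exactly what would be needed to \emph{justify} the paper's ``longest-living defect'' assertion, in particular the claim that merges cannot produce an active cluster outliving this lone-defect case; the paper leaves that implicit. So your route buys rigour at the cost of length, while the paper's buys brevity by treating the worst case as self-evident.

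One small inaccuracy worth flagging: you state that clusters ``grow downward and sideways in time but never upward'' and hence that the root is pinned to the most recent defect's nucleation layer. In fact Snowflake grows active clusters in \emph{all} directions including upward once the layer above exists; the reason the temporal boundary is never touched is only that clusters fall (in the $W$-frame) twice as fast as they grow. Consequently $\lambda(K)$ can increase during a growth round, so $t-\lambda(K)$ increases by at most one rather than exactly one. This only helps your invariant, so the upper-bound argument survives unchanged; but the sentence justifying the root's location should be weakened accordingly.
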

\begin{proof}
Defects are annihilated by
	either meeting another defect
	or being pushed to a boundary.
So a longest-living defect is one
like that in \cref{fig:snowflake}a:
	never meeting another defect,
	and created as far as possible,
		i.e.\ $\lfloor d/2 \rfloor$ edges,
		from any boundary.
The cluster containing this defect
grows outward by half an edge per decoding cycle
so will touch a boundary after $b_{\min}$ decoding cycles,
at the $(b_{\min} + 1)^\th$ layer from the top.
So $\mathbb T$ always annihilates all defects
in this layer and below,
but not above.
\end{proof}
\noindent We set $b =b_{\min}$ for our distributed implementation.
Minimising $c$ and $b$ in the frugal method
roughly halves the height of $W$
compared to the forward method for $c =b =d$.
As \cref{fig:distributed_implementation} shows,
this more than halves the processor count of our implementation
described in the next subsection.

\subsection{Distributed Implementation}
\label{sec:distributed_implementation}
Here we describe how Snowflake can run on a distributed implementation.
We use the same local architecture of Macar and Actis in \cite{Chan2023c}:
a processor for each node in $V_W$,
a communication link for each edge
and one controller which manages the global variables of the algorithm.
For the surface code,
the processors and links form a 3D lattice
oriented so the time axis points upward,
as in \cref{fig:distributed_implementation}b.
Our decoder is pipelined
as successively lower layers in the lattice
process increasingly mature stages of decoding.

\begin{figure}
	\centering
	\includegraphics[width=0.48\textwidth]{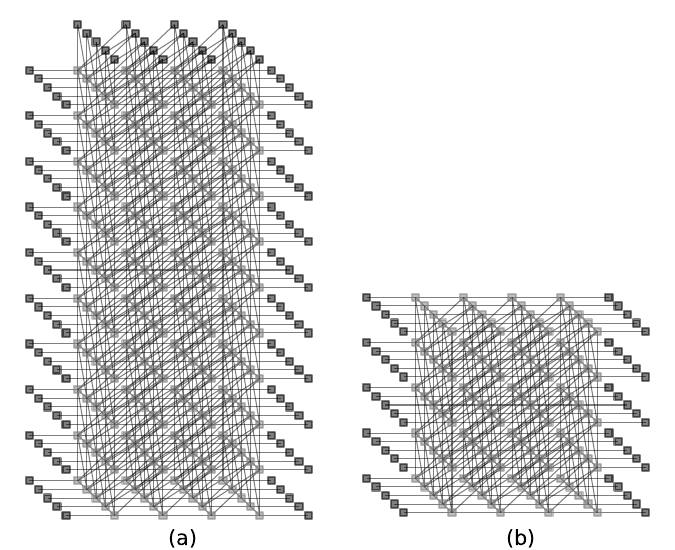}
	\caption{Distributed implementation of the decoders
	benchmarked in \cref{sec:numerics};
	the two shown here are for the distance-5 unrotated surface code.
	Each square is a classical processor;
	each line is a communication link.
	Darker squares correspond to boundary nodes in the decoding graph.
	(a) Macar \cite{Chan2023c} adapted with the forward method.
	(b) Snowflake;
	note the absence of a future boundary.}
	\label{fig:distributed_implementation}
\end{figure}

Snowflake is compatible with any choice of network
for controller--processor communication,
be it the hierarchical tree used in Macar
or the strictly local staging and signalling tree in Actis,
with synchronous or asynchronous logic.
We thus leave this as an implementation choice.

To rudimentarily measure the speed of our implementation,
we consider a simple computational model
similar to cellular automata.
Each component (i.e.\ controller or processor)
updates its state based on its previous state
and that of its neighbours.
These updates occur synchronously across the whole implementation,
allowing us to define the following.
\begin{defn}\label{defn:timestep}
A \emph{timestep} is the duration between consecutive updates
of a component's state.
\end{defn}
\noindent
The clock cycle of a hardware implementation
would roughly be proportional to a timestep.
\cref{defn:timestep} limits
how fast information propagates through the 3D lattice:
messages travel one communication link per timestep.
For example,
\cref{alg:cluster_procedures} \textsc{Grow} lasts 1 timestep,
since information need only travel from a node
to its incident edges.
On the other hand,
\textsc{Merge} lasts a variable number of timesteps
dependent on the diameter of the cluster,
because information must propagate through the whole cluster
to establish its root,
and to push defects to this root.
\Cref{sec:distributed_implementation_of_snowflake}
provides the rest of the details
of our distributed implementation.

\section{Numerics}\label{sec:numerics}

\begin{figure}
	\centering
	\includegraphics[width=0.48\textwidth]{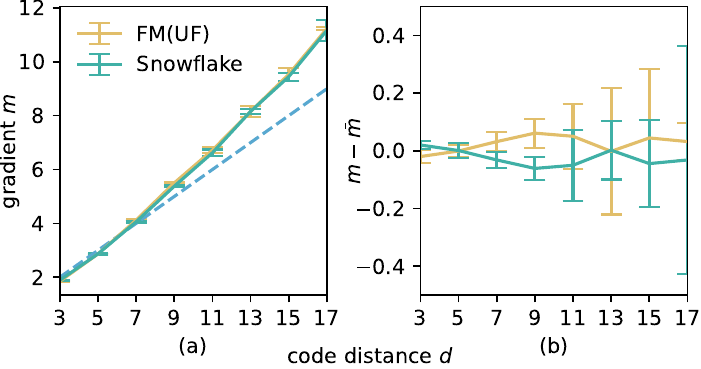}
	\caption{Errorbars show standard error.
	(a) Gradient $m$ of the lines in
	the $p \le \num{6e-3}$ region of \cref{fig:FM_UF_v_Snowflake},
	estimated using Weighted Least Squares.
	The dashed blue line shows
	the asymptotic ($p \to 0$) value $m =(d+1)/2$ from theory
	\cite[\S 5]{Fowler2012b}.
	(b) Deviation of said gradient from the mean
	$\bar m :=\frac12(m_\text{FM(UF)} +m_\text{Snowflake})$
	of both decoders.}
	\label{fig:accuracy_gradients}
\end{figure}
Here,
we numerically analyse the accuracy, throughput, and latency of Snowflake
in \cref{sec:accuracy,sec:throughput,,sec:latency}, respectively.
We wrote a Python package \cite{Chan2023a_quantum_bibstyle}
to emulate the distributed implementation of Snowflake
that follows the computational model from the previous subsection.
We use it to run the Z memory experiment
using the surface code
under the circuit-level noise model detailed in \cref{sec:noise_model}.
Specifically,
we decode the syndrome from $n \gg d$ stabiliser measurement rounds,
then count one of two features
depending on if we are benchmarking accuracy or throughput.
To benchmark latency,
we emulate and decode the end of the memory experiment.

For comparison we also benchmark FM(UF).
We fix its decoding window to $c =b =d$
as we found that smaller $c, b$ values led to both worse
accuracies and overall runtimes,
consistent with \cref{rem:small_buffer_or_commit_height}.
The distributed implementation of FM(UF) i.e.\ FM(Macar),
and of Snowflake,
is shown in \cref{fig:distributed_implementation}.
For both,
the controller--processor communication time cost is the same,
so we exclude it by assuming
the controller connects directly to each processor.

\subsection{Accuracy}\label{sec:accuracy}

We evaluate accuracy by \cref{defn:logical_error_rate}:
after decoding,
we count the number $l$ of logical flips,
identifying such paths
using the method described in \cite{Chan2024a}.
We then estimate the logical error rate $\widehat f =ld/n$.

\Cref{fig:FM_UF_v_Snowflake} compares the accuracy of
FM(UF) and Snowflake,
for practical noise levels and code distances.
Below threshold,
we assume the power law
$f/f_\th =(p/p_\th)^m$.
\Cref{fig:accuracy_gradients} shows
the subthreshold gradients $m(d)$
are indistinguishable for the two decoders
so we can characterise their accuracy difference
solely by their thresholds.

While FM(UF) has a slightly higher threshold,
Snowflake leads to
a logical qubit lifetime $1/f$ that is \num{24.9(5)}\% longer
on average over the \cref{fig:FM_UF_v_Snowflake} data.
We are unsure why Snowflake is more accurate
but suspect it may relate to
its smoother treatment of $G$.
The most obvious difference between
the outputs of UF and Snowflake
is in the relative cluster sizes
e.g.\ were UF used in \cref{fig:fix_issue},
both clusters would be the same size when they merge.
In \cref{sec:when_a_cluster_touches_both_boundaries}
we rule out another possible explanation
related to how each decoder resolves a cluster touching both boundaries.

\subsection{Throughput}\label{sec:throughput}

\begin{figure}
	\centering
	\includegraphics[width=0.48\textwidth]{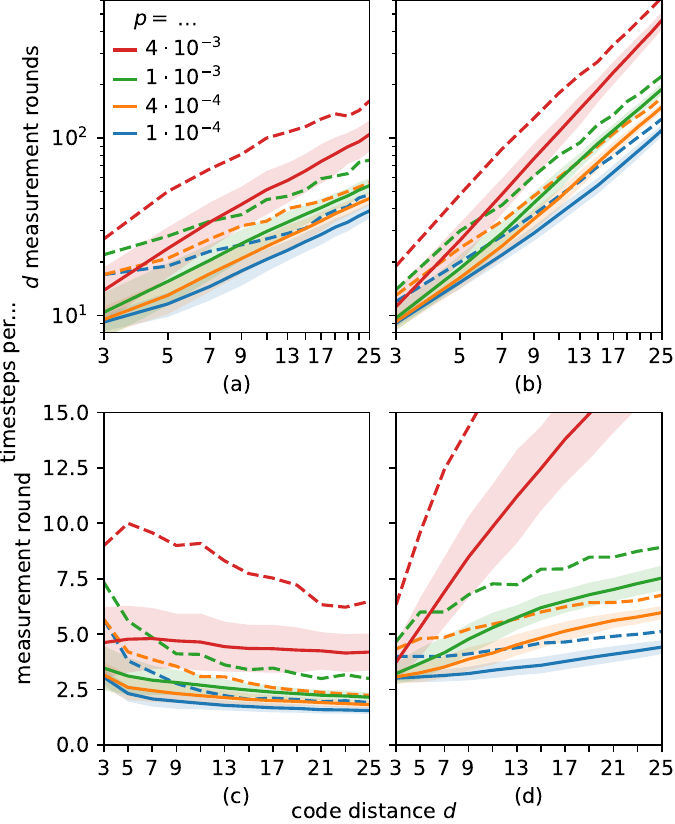}
	\caption{Decoder throughput
	against surface code distance $d$,
	for various circuit-level noise levels $p$ below threshold.
	A \emph{timestep} is defined in \cref{defn:timestep}.
	Each solid (dashed) line shows the mean ($99^\th$ percentile)
	of \num{1e3} lots of $d$ measurement rounds;
	shading shows standard deviation.
	The standard error of the mean is smaller than the solid line thickness.
	(a) Macar \cite{Chan2023c} adapted with the forward method.
	(b) Snowflake.
	(c,d) Same as a,b but rescaled.}
	\label{fig:mean_runtime}
\end{figure}

\begin{figure}
	\centering
	\includegraphics[width=0.48\textwidth]{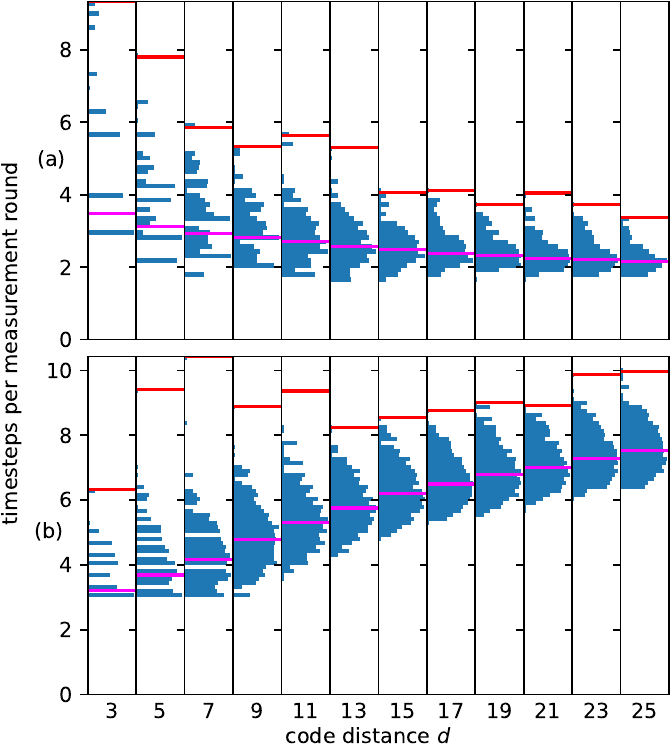}
	\caption{A more detailed view of
	the $p =\num{1e-3}$ throughput data in \cref{fig:mean_runtime}c and d,
	respectively.
	Within each histogram:
	the horizontal magenta (red) line shows the mean (maximum)
	of \num{1e3} lots of $d$ measurement rounds,
	and the $x$-axis scale is logarithmic.}
	\label{fig:runtime_distribution}
\end{figure}

For this we count the number of timesteps
(\cref{defn:timestep}) to decode.
We consider this per $d$ measurement rounds
as each logical operation lasts $\mathcal O(d)$ rounds
using lattice surgery \cite{Horsman2012},
the best known technique for universal computation
with the surface code when the quantum device has local connectivity
\cite{Fowler2019,Litinski2019}.
We also consider the timestep count \emph{per stabiliser measurement round}
because for real-time decoding,
the decoder's throughput requirement
is imposed by the quantum device's stabiliser measurement round duration \cite[\S 1]{Battistel2023}
e.g.\ \qty{\approx 1}{\micro s} for superconducting qubits
\cite[Figure~1]{Google2023} \cite[p~924]{Google2025}.

\Cref{fig:mean_runtime} compares the runtime of
FM(Macar) and Snowflake.
To adapt Macar with the forward method,
we measure its runtime to decode a window,
then add $c$ timesteps due to \cref{step:raise} of the forward method.
Using this data,
\cref{tab:gradients} estimates the scaling $m$,
assuming mean runtime $\propto d^m$.

\begin{table}[h]
\caption{Gradient $m$ of the solid lines in \cref{fig:mean_runtime}a,b
estimated using Weighted Least Squares.}
\centering
\begin{tabular}{rll}
	\toprule
	$p$	& FM(Macar)	& Snowflake \\
	\midrule
	$\num{4e-3}$	& 0.94(1)	& 1.770(4) \\
	$\num{1e-3}$	& 0.778(8)	& 1.42(1) \\
	$\num{4e-4}$	& 0.761(7)	& 1.34(2) \\
	$\num{1e-4}$	& 0.69(2)	& 1.16(2) \\
	\bottomrule
\end{tabular}
\label{tab:gradients}
\end{table}

\noindent It shows that FM(Macar)'s scaling is sublinear ($m <1$)
whereas Snowflake's is subquadratic ($m <2$).
\begin{rem}\label{rem:serialization}
We suspect this difference in mean runtime scaling is because for FM(Macar),
active clusters in the same batch all grow at the same time
regardless of their vertical coordinate.
In Snowflake,
this process is somewhat serialised:
older active clusters start growing before newer ones do,
as they enter $W$ earlier;
see e.g.\ \cref{fig:snowflake}b.
\end{rem}
\noindent
Despite this discrepancy,
subquadratic is still better than
the slightly-higher-than-cubic scaling
\cite[\S II.B]{Liyanage2023a} of original UF.

To investigate the practical regime further,
\cref{fig:runtime_distribution}
plots the distribution of runtimes for both decoders
at a practical noise level of \num{1e-3},
where Snowflake is \numrange{\approx0.9}{3.5} times slower than FM(Macar).
Neither distribution has a long tail;
such consistent runtimes are beneficial for hard real-time decoding:
the time limit can be set lower
for a given target timeout failure rate,
accelerating the quantum program
and increasing qubit coherence.
Note that Snowflake uses fewer than
half the number of processors compared to FM(Macar)
(see \cref{fig:distributed_implementation}).
Snowflake may therefore be advantageous in architectures
where decoder footprint is more of a limitation than decoding throughput.

\subsection{Latency}\label{sec:latency}

Latency is the duration
between the decoder receiving the final sheet of syndrome data
(corresponding to the end of the memory experiment)
to when the decoder commits its final tentative correction.
Low latency is crucial for gate teleportation,
where the correction to the logical observable
can no longer be tracked in the Pauli frame~\cite{Khalid2025}.

\Cref{fig:latency_distribution_1e-3} plots the distribution of latencies
for both decoders at a practical noise level of \num{1e-3}.
As before,
we adapt Macar with the forward method
by measuring its runtime to decode a window $W$
but this time add $b +c$ timesteps
because the final commit region comprises all $b +c$ layers of $W$.
The figure shows that Snowflake has a lower latency than FM(Macar);
this is mainly due to the height $d$ of Snowflake's decoding window,
which is half that of FM(Macar).

\section{Conclusion}\label{sec:conclusion}
We propose the frugal method for stream decoding
of quantum error-correcting codes
for which there exists a decoding graph embeddable in $\mathbb R^3$
without `long-range' edges.
As \cite[\S V]{Gong2024} points out,
this paradigm of
reusing decoder information from previous windows
has been explored already for \emph{classical} LDPC codes
e.g.\ in \cite[\S III.B.2]{Ali2018}.
We also design Snowflake,
a simple decoder compatible with the frugal method
and numerically benchmark its distributed implementation.
It is more accurate than original UF
and its runtime scales more favourably.
There are various avenues of further investigation
for the frugal method and Snowflake:

\begin{figure}
	\centering
	\includegraphics[width=0.48\textwidth]{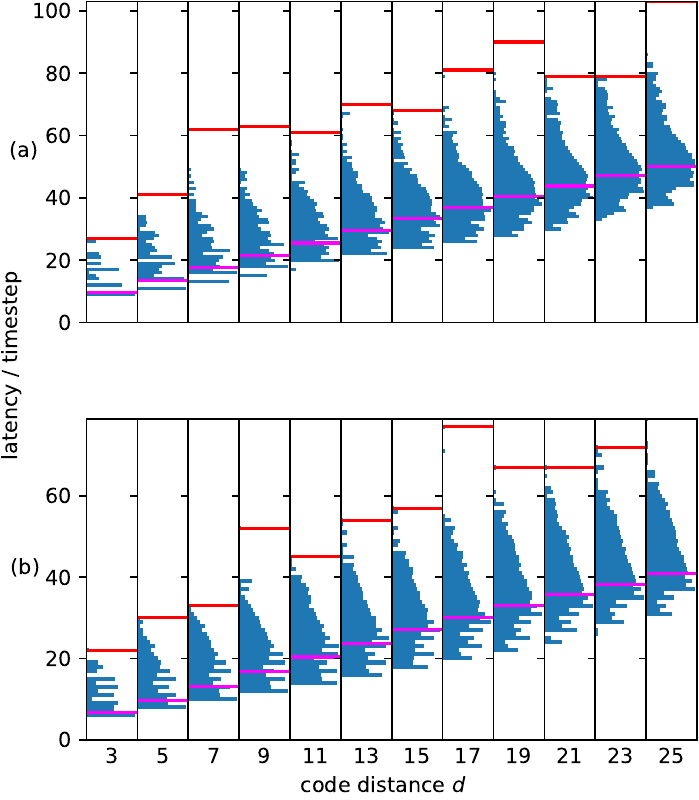}
	\caption{Decoder latency against surface code distance
	for circuit-level noise level $\num{1e-3}$.
	A \emph{timestep} is defined in \cref{defn:timestep}.
	Within each histogram:
	the horizontal magenta (red) line shows the mean (maximum)
	of \num{1e4} shots,
	and the $x$-axis scale is logarithmic.
	(a) Macar \cite{Chan2023c} adapted with the forward method.
	(b) Snowflake.}
	\label{fig:latency_distribution_1e-3}
\end{figure}

\paragraph{2D Architecture}
The serialisation in the vertical direction
mentioned in \cref{rem:serialization}
suggests the viability of an alternative architecture:
`flattening' the 3D lattice down onto a 2D grid
so there is one processor
for each \emph{set of nodes} of the same $(x, y)$ coordinate
i.e.\ each ancilla qubit.
This comes at the cost of
serialising the operations within each set,
but if these operations are inherently serial,
the cost could be marginal.
There are two benefits:
	first,
		it is more practical
		as a 2D chip is far easier to build than a 3D lattice of processors;
	second,
		messages passed vertically would be instantaneous within a timestep.
Future work would involve emulation and numerical analysis of this flattened implementation.

\paragraph{Asynchronous Logic} As discussed in \cite[\S 6]{Chan2023c},
controller--processor communication could run by
asynchronous (instead of synchronous) logic,
which would speed up Snowflake.
One could explore which other processes in Snowflake
could be sped up by asynchronous logic.

\paragraph{Weighted Edges} Clusters could grow along edges at different speeds
to account for their different probabilities in $\v p$,
which would improve accuracy \cite{Huang2020}.
These speeds could even be configurable in the hardware
to fine-tune the noise model during a calibration stage;
this however would alter \cref{prop:min_buffer_height},
requiring $b_{\min}$ to also be configurable,
which is nontrivial.

\paragraph{Computation} Recently,
the sandwich method has been generalised
	from quantum memory
	to lattice surgery-based computation
\cite{Gidney2022,Bombin2023,Lin2025a}.
We propose an alternative approach to decoding lattice surgery
that exploits the granular and local nature of Snowflake.
Imagine the decoder as a homogeneous substrate
positioned just above
the entire 2D lattice of qubits.
Regions of this substrate are turned on (to decode) wherever
the below qubits are involved in surgery.
Snowflake would work best with so-called
\emph{twist-free} lattice surgery \cite[\S IV]{Chamberland2022}
which minimises the use of nonlocal stabilisers.
These elongated stabilisers
introduce in $G$ additional detectors \cite[Appendix~A.3]{Geher2024}
and accounting for them in our scheme would be nontrivial.
Some details of the distributed implementation in \cref{sec:setup}
would need to be generalised.
E.g.\ since boundaries vary during lattice surgery,
the \verb|ID| of each processor
would no longer be a constant integer;
rather, a combination of
	a constant 3D coordinate
	and a (relative) variable boolean indicating whether
		the processor represents a boundary node.
This preserves \cref{rem:boundaries_lower_ID_than_detectors}.
For controller--processor communication,
a hierarchical tree
like that used in Macar
would be best,
as communication cost would then scale logarithmically
with the length of the qubit lattice,
which may be much larger than $d$.

\begin{acknowledgments}
I thank
	Simon Benjamin,
	Matt Cassie,
	and Armands Strikis
for useful discussions,
and
	Luka Skoric,
	Alexandru Paler,
	and Timo Hillmann
for helpful comments.
I acknowledge the use of
the University of Oxford Advanced Research Computing
(ARC)
facility~\cite{Richards2015_quantum_bibstyle} in carrying out this work
and specifically the facilities made available
from the EPSRC QCS Hub grant
(agreement No.\ EP/T001062/1).
I also acknowledge support from
an EPSRC DTP studentship,
two EPSRC projects
RoaRQ (EP/W032635/1)
and SEEQA (EP/Y004655/1),
and JST ASPIRE Japan Grant Number JPMJAP2319.
\end{acknowledgments}

\section*{Competing Interests}
The author declares a relevant patent application pending:
PCT/EP2025/063485,
published as WO2025238191A1 on 2025-11-20.

\bibliographystyle{quantum}
\bibliography{tchbib}

\appendix

\section{Noise Model}
\label{sec:noise_model}
\begin{figure}[H]
	\centering
	\begin{tabular}{cc}
		\begin{tikzpicture}
			\begin{yquant}
			qubit {\textsc{n}} d[1];
			qubit {\textsc{w}} d[+1];
			qubit {\textsc{e}} d[+1];
			qubit {\textsc{s}} d[+1];
			qubit {\textsc{a}} a;
			cnot a | d[0];
			cnot a | d[1];
			cnot a | d[2];
			cnot a | d[3];
			[type=qubit] measure {\textsc{z}} a;
			\end{yquant}
		\end{tikzpicture}
		&
		\begin{tikzpicture}
			\begin{yquant}
			qubit {\textsc{n}} d[1];
			qubit {\textsc{w}} d[+1];
			qubit {\textsc{e}} d[+1];
			qubit {\textsc{s}} d[+1];
			qubit {\textsc{a}} a;
			cnot d[0] | a;
			cnot d[1] | a;
			cnot d[2] | a;
			cnot d[3] | a;
			[type=qubit] measure {\textsc{x}} a;
			\end{yquant}
		\end{tikzpicture}
		\\
		(a) & (b)
	\end{tabular}
	\caption{The (a) Z- and (b) X-stabiliser measurement circuits for the surface code.
	\textsc{n, w, e, s} are the data qubits
	north,
	west,
	east,
	and south of ancilla qubit \textsc{a}, respectively.
	Each data qubit interacts with a different ancilla qubit
	every step for the first four steps
	but only one interaction is shown.
	The initial state of \textsc{a} is
	the measurement outcome from the previous round
	i.e.\ $\ket{0}$ or $\ket{1}$ for (a),
	and $\ket{+}$ or $\ket{-}$ for (b).}
	\label{fig:5_step}
\end{figure}
\begin{table}[h!]
	\centering
	\begin{tabularx}{\linewidth}{rX}
		\toprule
		Probability & Error Process \\
		\midrule
		$p_\M$ & qubit measurement reports the opposite outcome \\
		$p_1$ & qubit idle is followed by
		an error drawn randomly from
		$`{\hat \X, \hat \Y, \hat \Z}$ \\
		$p_2$ & $\C \hat \X$ is followed by
		an error drawn randomly from
		$`{\hat\I, \hat \X, \hat \Y, \hat \Z}^{\otimes 2}
		\setminus `{\hat\I{}^{\otimes 2}}$ \\
		\bottomrule
	\end{tabularx}
	\caption{Error processes for the circuit-level depolarising noise model.
	Each process can occur once in each of the 5 steps in \cref{fig:5_step}.}
	\label{tab:circuit_level_depolarising_noise}
\end{table}
\noindent
Our numerics were performed using the same noise model as in \cite[\S B]{Chan2023c}.
Namely,
we assume the stabiliser measurement circuits in \cref{fig:5_step}
and apply the error processes in \cref{tab:circuit_level_depolarising_noise},
using the balanced parametrisation from \cite[p~1]{Wang2011}:
\begin{equation}
(p_\M, p_1, p_2) =`\Big(\frac23 \frac45, \frac45, 1) p,
\end{equation}
where $p$ is the noise level.

\section{Proof of \cref{prop:no_mixed_clusters}}
\label{sec:proof_of_prop:no_mixed_clusters}
We first define three functions.
The \emph{wholeness} of a non-mixed cluster $K$ is
\begin{equation}
\wh K :=\begin{cases*}
1	&$K$ a whole cluster, \\
\frac12	&$K$ a half cluster.
\end{cases*}
\end{equation}
The \emph{ungrown length} of a path $P$ in $W$ that comprises
$f$ fully grown edges,
$h$ half-grown edges,
and $z$ ungrown edges,
is
\begin{equation}
l(P) :=\tfrac12 h +z.
\end{equation}
The \emph{ungrown distance} between clusters $K, L$
is the minimum ungrown length of a path
	from a node in $K$
	to a node in $L$:
\begin{equation}
d(K, L) :=\min`{l(uv, \dots, wx):
\text{$ux \in V_K \times V_L$}}.
\end{equation}
Now consider the following.
\begin{lem}\label{lem:whole_half_distance}
After each growth round of Snowflake with the 2:1 schedule,
we have
\begin{equation}\label{eq:whole_half_distance}
d(K, L) \in \mathbb Z +\wh K +\wh L
\end{equation}
for any two non-mixed clusters $K, L$.
\end{lem}
\begin{proof}
By induction on the number $r$ of growth rounds elapsed.

Consider base case $r =0$.
All $e \in E_W$ are ungrown
so $l(P) \in \mathbb Z$ for any path $P$ in $W$,
and so $d(K, L) \in \mathbb Z$.
\emph{All} clusters are whole
so $\wh K +\wh L =2$,
and so \cref{eq:whole_half_distance} is true for $r =0$.

Now the inductive step;
assume true after $r$ growth rounds
and consider growth round $r +1$.
If\dots
\begin{enumerate}
	\item neither $K$ nor $L$ grows,
	all terms in \cref{eq:whole_half_distance} are unchanged;
	\item $K$ grows but not $L$,
	$d(K, L)$ decreases by $\frac12$
	and $\wh K$ changes by $\frac12$;
	\item both $K$ and $L$ grow,
	we must have $\wh K +\wh L \in `{1, 2}$,
	so $K$ and $L$ cannot be half an edge apart before growth,
	so $d(K, L)$ decreases by 1.
	Also,
	$\wh K$ and $\wh L$ each change by $\frac12$.
\end{enumerate}
In each case,
\cref{eq:whole_half_distance} remains true.
\end{proof}
\noindent
Now we can prove the desired statement.
\begin{proof}[Proof of \cref{prop:no_mixed_clusters}]
There are no mixed clusters initially.
By \cref{lem:whole_half_distance},
a whole cluster never merges with a half cluster,
so mixed clusters never form.
\end{proof}

\section{Schedule Accuracy Comparison}
\label{sec:schedule_accuracy_comparison}

\Cref{sec:cluster_growth_schedules}
discussed 1:1 and 2:1
as two possible cluster growth schedules for Snowflake.
\Cref{fig:schedule_accuracy_comparison} shows
how Snowflake performs with each schedule:
2:1 almost doubles the threshold $p_\th$ of 1:1.

\begin{figure}[H]
	\centering
	\includegraphics[width=0.48\textwidth]{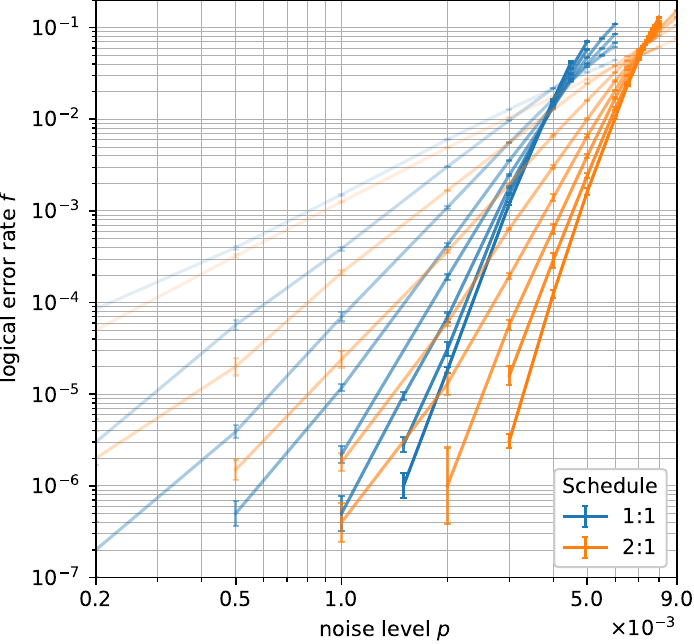}
	\caption{Same as \cref{fig:FM_UF_v_Snowflake}
	but the blue lines instead show data for
	Snowflake with the 1:1 cluster growth schedule.
	The coordinate $(p, f)_\th$ of the threshold is around
	$(0.375, 9)\cdot\num{1e-2}$ for 1:1, and
	$(0.735, 6.8)\cdot\num{1e-2}$ for 2:1.}
	\label{fig:schedule_accuracy_comparison}
\end{figure}

\section{Distributed Implementation of Snowflake}
\label{sec:distributed_implementation_of_snowflake}
In this section we refer to
processors as nodes and
communication links as edges.
The controller can broadcast the same boolean to,
and receive (without knowing the sender) booleans from,
every node.

In \cref{sec:setup} we list the variables each component stores
and in \cref{sec:procedures} we explain how they are manipulated
in order to effect Snowflake.
\Cref{sec:processes_common_to_both_merging_stages,sec:processes_unique_to_merging_whole}
explain the variable-duration stages in more detail.
\Cref{sec:after_data_qubit_measurement} describes how we minimise decoding latency
after all data qubits are measured out.

\subsection{Setup}\label{sec:setup}
Each node $v \in V_W$ is assigned a unique integer \verb|ID|
and stores the following variables:

\begin{enumerate}
	\item \verb|active| is a boolean indicating
	whether $v$ is in an active cluster.
	If $v.\verb|active| =\verb|true|$ we say $v$ is active.
	\item \texttt{whole} is a boolean indicating
	whether $v$ is in a whole cluster.
	If $v.\texttt{whole} =\texttt{true}$ we say $v$ is whole.
	\item \verb|CID| is an integer indicating
	the cluster that $v$ belongs to.
	Clusters are identified by the lowest \verb|ID| of all its nodes.
	The node of this \verb|ID| is the root.
	\verb|CID| can also adopt the value \verb|reset|
	to indicate $v$ should \emph{unroot}
	(defined in \cref{sec:unrooting})
	in the next timestep.
	\item \verb|defect| is a boolean indicating whether $v$ has a defect.
	This can be thought of as a particle which is passed between nodes.
	When a node receives two defects in a timestep,
	they annihilate.
	\item \verb|pointer| is the direction $v$ should push the defect
	(by flipping the corresponding edge).
	Possible values (under circuit-level noise) are
		\texttt{C},
		\texttt{N},
		\texttt{W},
		\texttt{E},
		\texttt{S},
		\texttt{D},
		\texttt{U},
		\texttt{NU},
		\texttt{WD},
		\texttt{EU},
		\texttt{SD},
		\texttt{NWD},
		\texttt{SEU},
	representing respectively
		centre,
		north,
		west,
		east,
		south,
		down,
		up,
	and combinations thereof.
	Following the pointers starting from $v$ should eventually lead to
	the root of the cluster that $v$ is in.
	Only roots have $\verb|pointer| =\verb|C|$
	as they do not push defects but accumulate them.
	\item \texttt{grown} is a boolean indicating
	whether the cluster that $v$ is in
	has grown in the first growth round of the current decoding cycle.
	\item \verb|unrooted| is a boolean indicating whether
	$v$ has unrooted in the current decoding cycle.
	\item \verb|busy| is a boolean indicating whether $v$ is busy.
	\item \verb|stage| is the current stage of Snowflake.
	This follows the flowchart in \cref{fig:stage_flowchart}.
\end{enumerate}

\begin{figure}[H]
	\centering
	\begin{tikzpicture}[
		stage/.style={
			draw=black,
			text centered,
			text depth=.25ex,
			text height=1.5ex
		},
		arrow/.style={-{Stealth[length=2mm]}},
	]
	\node[stage] (drop) {\texttt{drop}};
	\node[stage] (grow_whole) [right=of drop] {\texttt{grow\_whole}};
	\node[stage] (merging_whole) [right=of grow_whole] {\texttt{merging\_whole}};
	\node[stage] (grow_half) [below=of grow_whole] {\texttt{grow\_half}};
	\node[stage] (merging_half) [below=of merging_whole] {\texttt{merging\_half}};
	\draw[arrow] (drop.east) -- (grow_whole.west);
	\draw[arrow] (grow_whole.east) -- (merging_whole.west);
	\draw[arrow] ([xshift=2mm]merging_whole.south) -- ++(0,-4mm) -- ++(15mm,0) |- (merging_whole.east);
	\draw[arrow,dashed] (merging_whole.south west) -- (grow_half.north east);
	\draw[arrow] (grow_half.east) -- (merging_half.west);
	\draw[arrow] ([xshift=2mm]merging_half.south) -- ++(0,-4mm) -- ++(15mm,0) |- (merging_half.east);
	\draw[arrow,dashed] ([xshift=-2mm]merging_half.south) -- ++(0,-4mm) -| (drop.south);
	\end{tikzpicture}
	\caption{Flowchart for the stages of Snowflake:
	\texttt{drop}, \texttt{grow\_whole}, and \texttt{grow\_half} each last one timestep
	whereas the other two last a variable number,
	hence their loops.
	The dashed (solid) arrows indicate advances by
	the global controller (each node locally).
	Snowflake cycles through all five stages
	once per decoding cycle.}
	\label{fig:stage_flowchart}
\end{figure}

\begin{figure*}
	\centering
	\includegraphics[width=0.7\textwidth]{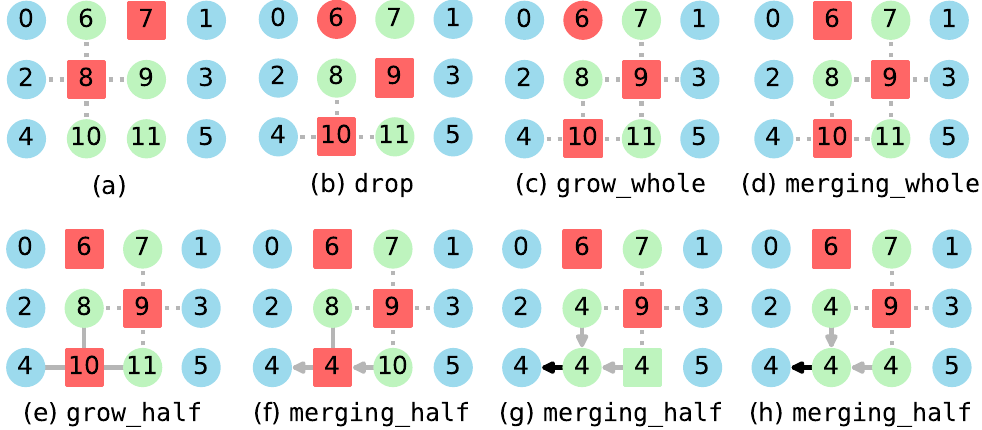}
	\caption{A decoding cycle example for a distance-3 code.
	Edges are shown as in \cref{fig:issue}.
	Active nodes are squares;
	inactive nodes, circles.
	\texttt{CID} is shown as a label.
	Pointers are shown by arrows on edges.
	(a) Each node is in its own cluster
	so has $\texttt{CID} =\texttt{ID}$.
	Two clusters are active.
	(b) \emph{All} relative variables fall by one layer,
	including \texttt{CID} which is adjusted to refer to the node below.
	A new defect appears at the top.
	(c) Node 9 grows its incident edges by $\frac12$
	as it is active and whole.
	(d) Node 6 recognises it is a root with a defect,
	so becomes active.
	(e) Node 10 grows its incident edges by $\frac12$
	as it is active and half.
	Nodes 10 and 11 now each see a lower \texttt{CID} than their own
	when they look at their neighbours along fully grown edges.
	(f) They adopt that \texttt{CID}
	and point toward that neighbour;
	node 8 thus sees a lower \texttt{CID} than its own.
	(g) A defect is pushed along a pointer to the root
	which flips an edge.
	Node 8 updates its \texttt{CID} and \texttt{pointer},
	so now the 4-node cluster is fully flooded but not yet fully synced.
	(h) Throughout this stage,
	each non-root node was adopting the \texttt{active} variable
	of the node it was pointing to.
	This activity propagation now completes,
	so the 4-node cluster is fully synced.}
	\label{fig:drop_grow_merging}
\end{figure*}

\noindent
We refer to the first five variables as \emph{relative}
and the last four as \emph{absolute}.
Each edge $e \in E_W$ has the following variables:
\begin{enumerate}
	\item \verb|growth| is its growth value
	which is in $`{0, \frac12, 1}$.
	\item \verb|correction| is a boolean indicating
	if $e \in \mathbb T$.
\end{enumerate}
These are stored as additional relative variables
at the endpoint with the lower \verb|ID|.
It helps to define the \emph{owned edges} as those whose variables a node stores:
\begin{equation}
v.\texttt{owned} :=`{uv: uv \in E_W~\text{and}~v.\texttt{ID} < u.\texttt{ID}}.
\end{equation}

\subsection{Procedures}\label{sec:procedures}

Though the pseudocode in this subsection explains exactly what each component does,
it is difficult to infer from this the \emph{emergent effect}.
We describe the latter in the accompanying prose.

\begin{algorithm}[H]
\caption{Run by controller every timestep.}\label{alg:controller_advance}
\begin{algorithmic}
	\If{not (waiting or any $v.\texttt{busy}$)}
		\For{$v \in V_W$}
			\State $v.\texttt{stage} \gets \text{next stage}$
		\EndFor
		\If{$v.\texttt{stage}$ was \texttt{merging\_whole}}
			\State wait 1 timestep
		\Else \Comment{$v.\texttt{stage}$ was \texttt{merging\_half}}
			\State wait 2 timesteps
		\EndIf
	\EndIf
\end{algorithmic}
\end{algorithm}

In each timestep the controller runs \cref{alg:controller_advance} which,
during either merging stage,
tells all nodes to advance to the next stage
if none of them are busy.
This is the only time the controller must communicate with the nodes,
as the merging stages are the only ones of variable duration.

\begin{algorithm}[H]
\caption{Run by node $v$ every timestep.}\label{alg:node_advance}
\begin{algorithmic}
	\State call the procedure in \cref{alg:procedures} matching $v.\texttt{stage}$
	\If{$v.\texttt{stage} \not\ni \texttt{merging}$}
		\State $v.\texttt{stage} \gets$ next stage
	\EndIf
\end{algorithmic}
\end{algorithm}

\begin{figure*}
\begin{minipage}{\linewidth}
\begin{algorithm}[H]
\caption{The five procedures that can be called in \cref{alg:node_advance}.}\label{alg:procedures}
\begin{algorithmic}[2]
	\Procedure{Drop}{$v$}
		\State $v.\texttt{grown} \gets \texttt{false}$ \label{line:reset_grown}
		\For{$e \in v.\texttt{owned} \cap \text{commit region}$}
			\If{$e.\texttt{correction}$}
			\Comment{Commit $e$.}
				\State $\mathbb C \gets \mathbb C \sd `{e}$
			\EndIf
		\EndFor
		\If{$\exists$ node $u$ below $v$}
		\Comment{Shift relative data down one layer.}
			\State $v.\texttt{CID} \gets$ \texttt{ID} of node below current root
			\State send all relative variables to $u$
		\EndIf
		\If{$v \in$ top sheet of $W$}
		\Comment{Inherit the new measurement round results.}
			\State $v.\texttt{defect} \gets$ whether
			the measurement of the corresponding ancilla qubit
			differs from the last round \label{line:set_defect}
			\State $v.\texttt{active} \gets \texttt{false}$ \label{line:reset_active}
		\EndIf
	\EndProcedure
	\State
	\Procedure{GrowWhole}{$v$}
		\If{$v.\texttt{active}$ and $v.\texttt{whole}$}
			\State \Call{Grow}{$v$}
			\State $v.\texttt{grown} \gets \texttt{true}$
		\EndIf
		\If{$v \in$ bottom sheet of $W$ and $v.\texttt{pointer} \ni\texttt{D}$}
		\Comment{Start unrooting $v$.}
		\label{line:pointer_D}
			\State $v.\texttt{CID} \gets \texttt{reset}$
			\State $v.\texttt{pointer} \gets \texttt{C}$
		\EndIf
	\EndProcedure
	\State
	\Procedure{GrowHalf}{$v$}
		\State $v.\texttt{unrooted} \gets \texttt{false}$ \label{line:reset_unrooted}
		\If{$v.\texttt{active}$ and not $v.\texttt{whole}$ and not $v.\texttt{grown}$}
			\State \Call{Grow}{$v$}
		\EndIf
	\EndProcedure
	\State
	\Procedure{MergingWhole}{$v$}
		\State $v.\texttt{busy} \gets \texttt{false}$
		\State \Call{Syncing}{$v$}
		\State \Call{FloodingWhole}{$v$}
	\EndProcedure
	\State
	\Procedure{MergingHalf}{$v$}
		\State $v.\texttt{busy} \gets \texttt{false}$
		\State \Call{Syncing}{$v$}
		\State \Call{FloodingHalf}{$v$}
	\EndProcedure
\end{algorithmic}
\end{algorithm}
\end{minipage}
\end{figure*}

\begin{figure*}
	\centering
	\includegraphics[width=\textwidth]{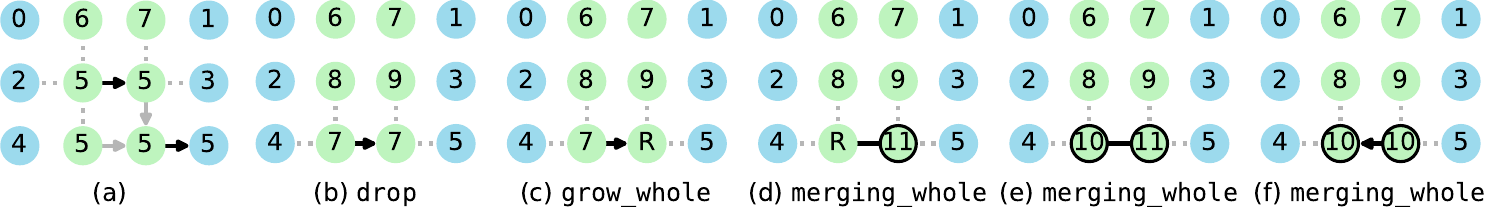}
	\caption{A decoding cycle which demonstrates unrooting.
	Relative variables are shown as in \cref{fig:drop_grow_merging}.
	\texttt{CID} value \texttt{reset} is labelled `\textsf{R}'.
	Nodes with \texttt{unrooted} as \texttt{true} are outlined in black.
	(a) There is a 5-node cluster.
	(b) A downward pointer is dropped to the bottom east detector.
	(c) Thus, this detector starts unrooting.
	(d) The bottom west detector sees this signal to unroot
	so starts unrooting.
	The bottom east detector finishes unrooting.
	(e) The bottom west detector finishes unrooting.
	(f) \texttt{merging\_whole} completes.}
	\label{fig:unrooting}
\end{figure*}

At the same time,
each node $v$ runs \cref{alg:node_advance},
calling procedures from \cref{alg:procedures}.
Stage \texttt{drop}
	adds to $\mathbb C$
	and raises $W$ by one layer,
as in \cref{fig:drop_grow_merging}b.
Each node $v$ in the top sheet of $W$ starts in its own cluster,
thus sets its relative variables to reflect this (\cref{line:set_defect,line:reset_active}).
$v.\texttt{whole}$,
$v.\texttt{CID}$,
and $v.\texttt{pointer}$
need not be explicitly reset
as they never change from their correct values of
\texttt{true},
$v.\texttt{ID}$,
and \texttt{C},
respectively.

Stage \texttt{grow\_whole} grows all edges around
each active whole cluster by $\frac12$,
as in \cref{fig:drop_grow_merging}c.
Stage \texttt{grow\_half} does the same
but for active half clusters that have not already grown
in the current decoding cycle,
as in \cref{fig:drop_grow_merging}e.

\Cref{fig:drop_grow_merging}d shows a merging example
that lasts just one timestep,
and f--h shows one that needs three timesteps.
The \texttt{merging\_whole} stage is the same as
\texttt{merging\_half} but with two extra processes,
so we first explain the latter:
it comprises two processes,
\emph{syncing} and \emph{flooding},
which occur independently.
We thus split \textsc{MergingHalf} into two subroutines,
given in \cref{alg:subroutines},
effecting each process.

\subsection{Processes Common to Both Merging Stages}
\label{sec:processes_common_to_both_merging_stages}
In flooding,
newly touching clusters adopt the lowest \verb|CID| among them.
This \verb|CID| propagates as a flood
starting from where they touched:
each node looks at its neighbours along fully grown edges.
If it sees a lower \verb|CID| than its own,
it adopts that \verb|CID|
and points toward that neighbour
(\crefrange{line:see_lower_CID}{line:match_CID}).
After this process,
all nodes in a given cluster will have the same \verb|CID|,
and following the pointers
will always lead to the root promised by \verb|CID|.

\begin{rem}\label{rem:boundaries_lower_ID_than_detectors}
Crucial to the algorithm is that
boundary nodes are of lower \verb|ID| than detectors.
Each root can thus determine if its cluster touches a boundary
by checking if itself is a boundary node.
\end{rem}

In syncing,
all defects are pushed along pointers toward roots,
one edge per timestep (\crefrange{line:if_x}{line:flip_uv_correction}).
Eventually,
each root will know using \cref{lem:cluster_inactivity}
whether its cluster is active (\cref{line:active_gets_x}).
At the same time,
this activity knowledge propagates
in the opposite direction
from root to the rest of the cluster (\cref{line:get_active_from_pointee}).
After this process,
a node is in an active cluster iff it is active.

\begin{algorithm}[H]
\caption{Subroutines for \cref{alg:procedures}.}\label{alg:subroutines}
\begin{algorithmic}[2]
	\Procedure{Grow}{$v$}
		\For{each edge $e$ incident to $v$}
			\State $e.\texttt{growth} \gets \min`{1, e.\texttt{growth} +\frac12}$
		\EndFor
		\State flip $v.\texttt{whole}$
	\EndProcedure
	\State
	\Procedure{Syncing}{$v$}
		\State $x :=(v~\text{is a detector})$ and $v.\texttt{defect}$
		\If{$v.\texttt{pointer} =\texttt{C}$}
		\Comment{$v$ is a root.}
			\State $v.\texttt{active} \gets x$ \label{line:active_gets_x}
		\Else
			\State $u :=$ node toward $v.\texttt{pointer}$
			\State $v.\texttt{active} \gets u.\texttt{active}$ \label{line:get_active_from_pointee}
			\If{$x$} \label{line:if_x}
				\State $v.\texttt{busy} \gets \texttt{true}$
				\State flip $uv.\texttt{correction}$ to push defect to $u$\label{line:flip_uv_correction}
			\EndIf
		\EndIf
		\If{$v.\texttt{active}$ changed}
			\State $v.\texttt{busy} \gets \texttt{true}$
		\EndIf
	\EndProcedure
	\State
	\Procedure{FloodingWhole}{$v$}
		\If{$v.\texttt{CID} =\texttt{reset}$}
		\Comment{Finish unrooting $v$.}
			\State $v.\texttt{busy} \gets \texttt{true}$
			\State $v.\texttt{CID} \gets v.\texttt{ID}$
			\State $v.\texttt{unrooted} \gets \texttt{true}$
		\Else
			\ForAll{$u: uv \in E_W \wedge uv.\texttt{growth} =1$} \label{line:access}
				\If{$u.\texttt{CID} =\texttt{reset}$}
					\If{not $v.\texttt{unrooted}$} \label{line:if_not_unrooted}
					\LComment{Start unrooting $v$.}
						\State $v.\texttt{busy} \gets \texttt{true}$
						\State $v.\texttt{CID} \gets \texttt{reset}$
						\State $v.\texttt{pointer} \gets \texttt{C}$
						\State break \label{line:break}
					\EndIf
				\Else
					\State \Call{Compare}{$u, v$}
				\EndIf
				\If{$u.\texttt{grown}$ and not $v.\texttt{grown}$} \label{line:see_grown_true}
					\State $v.\texttt{busy} \gets \texttt{true}$
					\State $v.\texttt{grown} \gets \texttt{true}$ \label{line:set_grown_to_true}
				\EndIf
			\EndFor
		\EndIf
	\EndProcedure
	\State
	\Procedure{FloodingHalf}{$v$}
		\ForAll{$u: uv \in E_W \wedge uv.\texttt{growth} =1$}
			\State \Call{Compare}{$u, v$}
		\EndFor
	\EndProcedure
	\State
	\Procedure{Compare}{$u, v$}
		\If{$u.\texttt{CID} <v.\texttt{CID}$} \label{line:see_lower_CID}
			\State $v.\texttt{busy} \gets \texttt{true}$
			\State $v.\texttt{pointer} \gets$ toward $u$
			\State $v.\texttt{CID} \gets u.\texttt{CID}$ \label{line:match_CID}
		\EndIf
	\EndProcedure
\end{algorithmic}
\end{algorithm}

Note that swapping the order of the two subroutines in \textsc{MergingHalf}
would speed up the stage,
as pointers would be updated \emph{before}
being used to push defects
and backpropagate activity.
However,
the practicality of this depends on the hardware
so in this paper we assume the worse case
and perform \textsc{Syncing} using pointers from the \emph{previous} timestep.

\subsection{Processes Unique to \texttt{merging\_whole}}
\label{sec:processes_unique_to_merging_whole}
This subsection will refer to both \cref{alg:procedures,alg:subroutines}.

\subsubsection{Tracking Which Clusters Have Grown}
The variable \texttt{grown} floods
in the same fashion as \texttt{cid}
but with no connection to \texttt{pointer}
(\cref{alg:subroutines} \crefrange{line:see_grown_true}{line:set_grown_to_true}).
This ensures all nodes in a given cluster
will have the same boolean for \texttt{grown}.
The memory of whether a cluster has grown
resets after each decoding cycle (\cref{alg:procedures} \cref{line:reset_grown}).

\subsubsection{Unrooting}\label{sec:unrooting}
\emph{Unrooting} is a process unique to Snowflake
and is needed when
a downward pointer is dropped to the bottom sheet of $W$
(can happen while a cluster falls out of view).
This breaks the pointer tree structure established in the previous decoding cycle,
as $\exists$ nonempty $S \subset V_W$
from which following the pointers will lead past the bottom of $W$.
To fix it,
each node in $S$ resets its \verb|CID| and \verb|pointer|
so that the structure can be rebuilt from scratch,
via further \texttt{merging\_whole} timesteps.
\Cref{fig:unrooting} shows an example.

The breaking point is found immediately after \texttt{drop}
(\cref{alg:procedures} \cref{line:pointer_D})
whence the signal to unroot,
$\verb|CID| =\verb|reset|$,
propagates during \texttt{merging\_whole} as a flood through the cluster;
each node unroots itself as soon as it receives this signal
(\cref{alg:subroutines} \crefrange{line:access}{line:break}).
It takes 2 timesteps for a node to unroot,
after which it cannot unroot again in the same decoding cycle
(\cref{alg:subroutines} \cref{line:if_not_unrooted}).
This ensures the flood does not backpropagate.
The memory of whether a node has unrooted
resets after this stage finishes (\cref{alg:procedures} \cref{line:reset_unrooted}).

The higher up a node is in $W$,
the lower its \verb|ID|;
this is for two reasons.
First,
it makes unrooting less likely
as pointer paths tend to lead upward rather than downward.
Second,
it ensures defects in active clusters
are pushed to a highest point
so they can annihilate with defects in newer, smaller clusters
as early as possible.

\subsection{After Data Qubit Measurement}
\label{sec:after_data_qubit_measurement}
As \cref{sec:latency} describes,
data qubit measurement (e.g.\ at the end of a memory experiment)
means there will be a final sheet of syndrome data.
In the worst case,
Snowflake needs $b +1$ decoding cycles
between receiving this final sheet
and committing its final tentative correction,
but on average,
most of these growing and merging stages are unnecessary
(when there are no more defects in the decoding window).
To minimise latency,
the distributed implementation bypasses these unnecessary stages by,
upon receiving the final sheet of syndrome data,
modifying \cref{alg:controller_advance} as follows:
the controller additionally checks after each decoding cycle
for any defects in the decoding window;
if there are none,
the controller repeatedly sets the stage of all nodes to \texttt{drop}
so that only \texttt{drop} stages are performed.

\section{When a Cluster Touches Both Boundaries}
\label{sec:when_a_cluster_touches_both_boundaries}
\begin{figure}[H]
	\centering
	\includegraphics[width=0.3\textwidth]{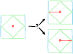}
	\caption{A cluster that touches both west and east boundaries.
	If it contains an odd number of defects,
	the correction must contain a path that connects to one of them.}
	\label{fig:arbitrary_choice}
\end{figure}
In either UF or Snowflake,
whenever a cluster has grown large enough to touch both the west and east boundaries,
there is an arbitrary choice of which boundary to push defects to,
as \cref{fig:arbitrary_choice} illustrates.
Whether the decoder chooses the same boundary or a random boundary every time
may impact the overall logical error rate.
Snowflake always chooses the west boundary,
whereas our implementation of FM(UF) may not necessarily.
To see if this explains the accuracy discrepancy between FM(UF) and Snowflake,
we reran emulation with a version of FM(UF) that always chooses the west boundary.
As \cref{fig:west_inclined_FM_UF} shows,
we observed a slight decrease in logical error rate
but it was not enough to explain the accuracy discrepancy.

\begin{figure}
	\centering
	\includegraphics[width=0.48\textwidth]{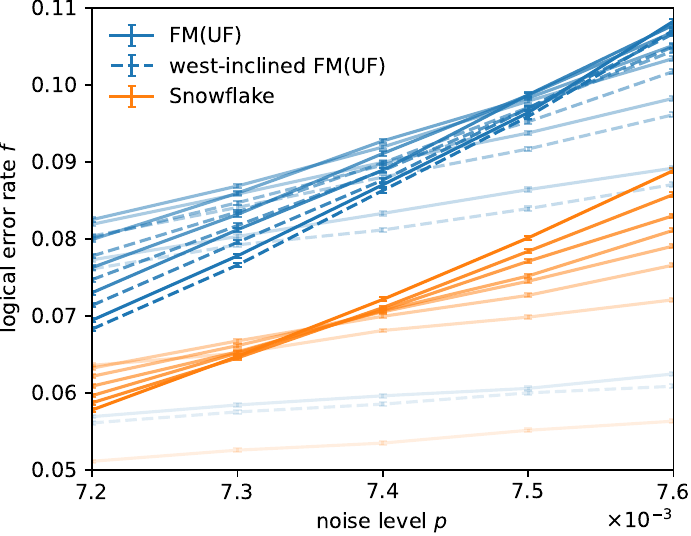}
	\caption{A zoomed-in view of \cref{fig:FM_UF_v_Snowflake}.
	The additional decoder,
	west-inclined FM(UF),
	is the same as FM(UF) but whenever a cluster touches both boundaries,
	it always pushes defects within that cluster to the west boundary,
	mimicking the behaviour of Snowflake.}
	\label{fig:west_inclined_FM_UF}
\end{figure}

\end{document}